\newcommand{\Rmnum}[1]{\expandafter\@slowromancap\romannumeral #1@}
\newtheorem{theorem}{Theorem}
\newtheorem{proof}{Proof}[section]
\begin{document}

\title{Optimal Cooperative Driving at Signal-Free Intersections with Polynomial-Time Complexity}%

\author{Huaxin Pei, Yuxiao Zhang, Yi Zhang, Shuo Feng
\thanks{Manuscript received in December 1, 2020; This work was supported in part by National 135 Key R\&D Program Projects under Grant 2018YFB1600600 and National Natural Science Foundation of China under Grant 61673233. (\emph{Corresponding author: Shuo Feng})}
\thanks{Huaxin Pei is with the Department of Automation, Tsinghua University, Beijing China, 100084 (e-mail: phx17@mails.tsinghua.edu.cn).}
\thanks{Yuxiao Zhang is with the Department of Computer Science and Technology, Tsinghua University, Beijing China, 100084 (e-mail: zhang-yx17@tsinghua.org.cn).}
\thanks{Yi Zhang is with the Department of Automation, BNRist, Tsinghua University, Beijing 100084, and Berkeley Shenzhen Institute (TBSI) Shenzhen 518055, China, and also with Jiangsu Province Collaborative Innovation Center of Modern Urban Traffic Technologies, Nanjing 210096, China (e-mail: zhyi@tsinghua.edu.cn).}
\thanks{Shuo Feng is with the Department of Civil and Environmental Engineering, University of Michigan, Ann Arbor, MI 48109 USA (e-mail: fshuo@umich.edu).}}


\maketitle
\begin{abstract}
  Cooperative driving at signal-free intersections, which aims to improve driving safety and efficiency for connected and automated vehicles, has attracted increasing interest in recent years. However, existing cooperative driving strategies either suffer from computational complexity or cannot guarantee global optimality. To fill this research gap, this paper proposes an optimal and computationally efficient cooperative driving strategy with the polynomial-time complexity. By modeling the conflict relations among the vehicles, the solution space of the cooperative driving problem is completely represented by a  newly designed small-size state space. Then, based on dynamic programming, the globally optimal solution can be searched inside the state space efficiently. It is proved that the proposed strategy can reduce the time complexity of computation from exponential to a small-degree polynomial. Simulation results further demonstrate that the proposed strategy can obtain the globally optimal solution within a limited computation time under various traffic demand settings.

\end{abstract}

\begin{IEEEkeywords}
Connected and Automated vehicles (CAVs), cooperative driving, signal-free intersection, dynamic programming.
\end{IEEEkeywords}
\IEEEpeerreviewmaketitle
\section{Introduction}
\IEEEPARstart{W}{ith} the help of vehicle-to-everything (V2X) technologies, connected and automated vehicles (CAVs) can share the driving states with the adjacent vehicles. In such case, cooperative driving for CAVs emerges as a promising way to improve traffic safety and efficiency \cite{FengSpatiotemporal,MengAnalysis,LiCooperative,FengString,yu2019managing,Malikopoulos2019,guler2014using}. It has been pointed out in \cite{13,MengAnalysis} that the key problem of cooperative driving is to determine the optimal sequence of vehicles passing through the conflict areas that mainly include on-ramp and signal-free intersection areas.
\par Cooperative driving at on-ramps has been well discussed in recent studies \cite{Rios-Torres2017-1,15, Group, 32,Jing2019}. However, compared with the merging problem, there exist much more complex conflict relations of vehicles at signal-free intersection, which leads to complicated interactions between vehicles. Thus, cooperative driving at intersections will fail by directly applying the strategies of merging problem, and further investigation is required. Generally, existing studies of planning passing sequence of vehicles at signal-free intersections can be classified into two categories according to the optimality, i.e., optimal and sub-optimal strategies.
\par The strategies aiming to find the globally optimal solution can be regarded as the \emph{\textbf{optimal strategy}}. One typical type of the optimal strategies is to formulate a large-scale mixed-integer programming (MIP) problem to resolve all conflicts between vehicles, where each conflict introduces a binary variable to mathematically describe the relative passing sequence \cite{2,M2016Intersection,Ahn2016Safety}. However, the computational complexity increases exponentially with the increasing number of conflicts between vehicles, which causes the ``curse of dimensionality''. Another type of optimal strategies is adopting a string to represent the vehicle sequence and then describe the complete solution space. For example, Li \emph{et al.} \cite{LiCooperative} formulated a spanning tree and proposed a pruning rule to search the globally optimal solution. However, due to the complex conflict relations between vehicles, it still suffers from the state space explosion and thus is intractable for real-time applications.
\par To improve the computational efficiency, the so-called \emph{\textbf{sub-optimal strategy}} usually stops at a sub-optimal solution within a limited computation time. Autonomous intersection management (AIM) \cite{Dresner2004Multiagent, Dresner2008A} and reservation-based strategy \cite{Huang2012Assessing, Choi2018Reservation} are the typical sub-optimal strategies utilized in the problem of cooperative driving, where heuristic rules are used to instruct the vehicles passing through the intersection roughly in first-in-first-out (FIFO) manner \cite{15}. Nevertheless, as illustrated in \cite{MengAnalysis, 6}, these strategies have a poor performance in improving traffic efficiency. In \cite{xu2018distributed}, the vehicles from different directions are projected into a virtual lane to describe and resolve the conflicts between vehicles. This method can significantly reduce the computational complexity, but it theoretically leads to a sub-optimal solution and thus cannot guarantee traffic efficiency. Recently, several promising strategies are proposed to keep a good balance between computational complexity and traffic efficiency. In \cite{16, 24}, Monte Carlo tree search (MCTS) method combining with heuristic rules is first introduced to the planning problem of vehicle sequence at a signal-free intersection. Numerical simulation results indicate that these strategies can obtain a near-optimal passing sequence within a limited computation time. However, the performance of these strategies lacks a rigorous theoretical guarantee and further analysis is necessary for practical applications.
\par To overcome the limitations of these existing optimal and sub-optimal strategies, we propose an optimal and computationally efficient cooperative driving strategy with polynomial-time complexity in this paper. By modeling the conflict relations among the vehicles, the solution space of the cooperative driving problem is completely represented by a newly designed small-size state space. Then, based on dynamic programming, the globally optimal solution can be searched inside the state space efficiently. It can reduce the time complexity of computation from exponential to a small-degree polynomial.
\par This paper significantly extends our previous work for the on-ramp merging problem \cite{15}, where a dynamic programming method was proposed to obtain the globally optimal merging solution with polynomial-time complexity. However, the method cannot be directly applied to signal-free intersections, because of the following challenges: First, the number of conflicts between vehicles significantly increases, as the interactions between vehicles become more complex. More importantly, both the conflict and conflict-free pairs of vehicles exist in the conflict area of an intersection, which is denoted by the heterogeneous conflict relations in this paper, namely, more than one vehicle may pass the conflict area at the same time without collision. In such case, state transition could not guarantee the Markov property and thus dynamic programming is invalidated \cite{IEEEexample:Baldwin2009Principles,denardo2012dynamic}.
\par In this paper, we rebuild the dynamic programming model for signal-free intersections. To keep Markov property, a novel state transition strategy is designed to model the conflict-free pairs of vehicles, where multiple non-conflict vehicles can be assigned with the right of way instead of dealing with only one vehicle during a state transition. It guarantees that each state transition explicitly represents one kind of conflicts between vehicles so that the objective value of the current state is only determined by its predecessor state, i.e., the state transition satisfies Markov property. Moreover, the number of states and state transitions of the constructed state space is restricted as a small-degree polynomial with the number of vehicles. Therefore, dynamic programming is reactivated at signal-free intersections, and we can obtain the globally optimal solution with the polynomial-time complexity of computation. It is a significant improvement compared with most existing studies about cooperative driving at signal-free intersections (e.g., \cite{LiCooperative,16,24}), where the size of solution space increases exponentially with the increasing number of vehicles. It is also worth emphasizing that the proposed strategy is universal in most traffic scenarios where both the conflict and conflict-free pairs of vehicles exist.
\par Theoretical analysis is proposed to justify the computational time complexity of the proposed optimal strategy. It is proved that the optimal strategy has the polynomial-time complexity of computation, which has a lower bound $O(N^4)$ and an upper bound $O(N^6)$, where $N$ denotes the total number of vehicles within consideration. It is the theoretical foundation for overcoming the limitations of existing strategies. Furthermore, simulation results also demonstrate that the proposed strategy can obtain the globally optimal solution within a limited computation time, comparing with the existing strategies under various traffic demand settings.
\par The main contributions of this paper include: (a) we construct a novel state space with a much smaller size than that in the existing studies to describe the complete solution space of cooperative driving at signal-free intersections; (b) we realize optimal cooperative driving at signal-free intersections, which overcomes the limitations of the existing optimal and sub-optimal strategies regarding global optimality and computational efficiency; (c) we give a rigorous theoretical analysis about the computational complexity of the proposed strategy, which is usually lacking in most existing studies.
\par The rest of this paper is organized as follows. \emph{Section \Rmnum{2}} presents the typical signal-free intersection and formulates the cooperative driving problem. \emph{Section \Rmnum{3}} proposes the optimal strategy for cooperative driving at intersections. Then, \emph{Section \Rmnum{4}} gives theoretical analysis about the computational complexity, and simulation results are in \emph{Section \Rmnum{5}}. Finally, the conclusion and further works are presented in \emph{Section \Rmnum{6}}.
\begin{figure}[t]
\centering
\includegraphics[width=2.5in]{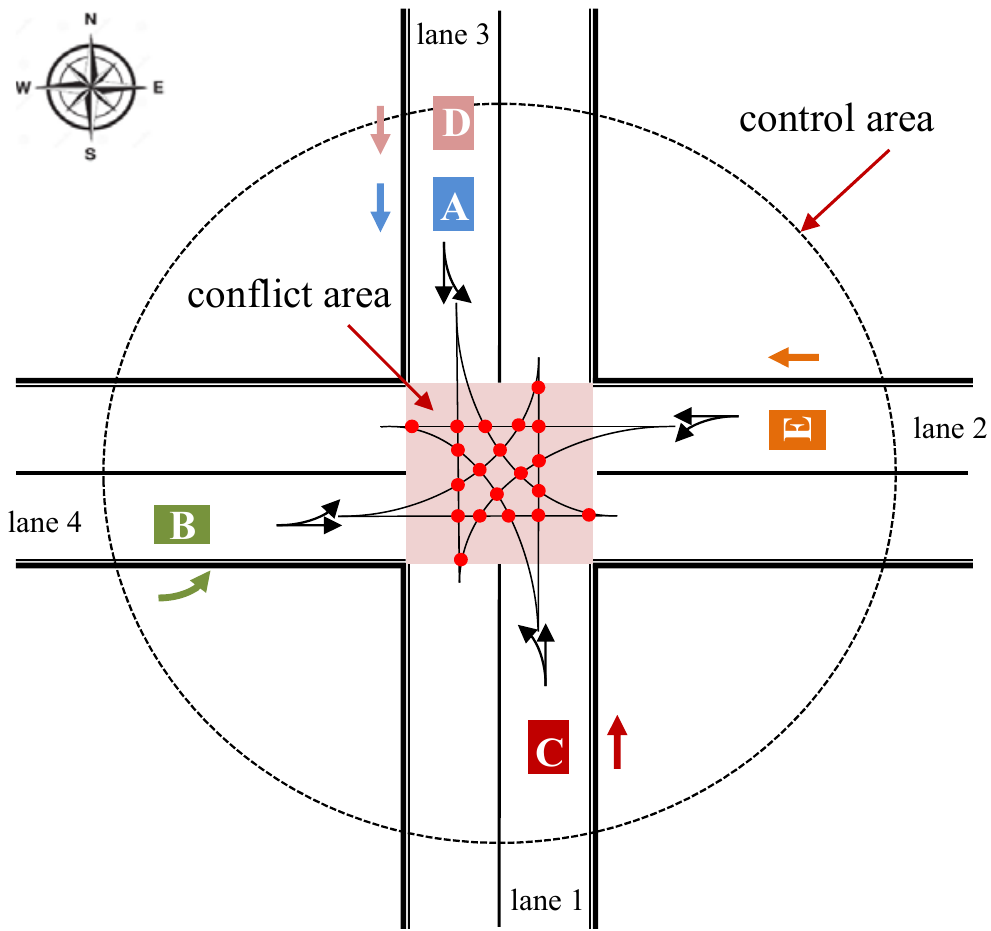}
\caption{A typical signal-free intersection scenario with five vehicles.}
\label{fig1}
\end{figure}
\section{Problem Formulation}
\label{sec2}
\subsection{Scenario and Notations}
\label{sec2-A}
\par In this paper, we select a typical intersection scenario to introduce our method, as shown in Fig. \ref{fig1}. The red area is the conflict area, where the vehicles from different directions may collide. The intersection has a control area, where the vehicles within the control area are considered into the cooperative driving problem. Actually, traffic collisions and delays mainly occur on the vehicles with left-turn and straight movements, and the right-turn vehicles can usually move freely around the intersection. Therefore, to simplify the descriptions, we only consider the left-turn and straight vehicles in this paper, and the scenario including right-turn vehicles can be easily tackled in a similar way.
\par Several reasonable assumptions about the studied scenario are added as follows: a) all vehicles are connected and automated (CAVs); b) lane-change behavior of vehicles within the control area is not allowed. Each vehicle is given a unique identity after arriving at the control area, and $CAV_i$ means the $i^{\text {th }}$ vehicle that reaches the control area. The vehicle identity sets of lane 1, 2, 3 and 4 are
denoted by $\mathbb{N}_1 = \{1,..., \hat n_1\}$, $\mathbb{N}_2 = \{1,..., \hat n_2\}$, $\mathbb{N}_3= \{1,...,\hat n_3\}$ and $\mathbb{N}_4= \{1,...,\hat n_4\}$ respectively, where $\hat n_1$, $\hat n_2$, $\hat n_3$ and $\hat n_4$ denote the total number of vehicles on lane 1, lane 2, lane 3 and lane 4 respectively. The main notations in this paper are shown in TABLE \ref{table1}.
\begin{table*}
\renewcommand{\arraystretch}{1.3}
\centering
\caption{The Nomenclature List}
\centering
\begin{tabular}{c|l}
\Xhline{1.2pt}
{\bfseries Variables} & {\bfseries Notations}\\
\Xhline{1.2pt}
$CAV_i$ & The $i^{th}$ vehicle that reaches the control area.\\

$\hat n_1$, $\hat n_2$, $\hat n_3$, $\hat n_4$ & The total number of vehicles on lane 1, lane 2, lane 3 and lane 4, respectively.\\

$v_{\max}$, $v_{\min}$ & The maximal and minimal velocity of vehicles.\\

$a_{\max}$, $a_{\min}$ & The maximal and minimal acceleration of vehicles.\\

$t_{\text{assign},i}$ & The arrival time assigned to $CAV_i$ to move into the control area.\\

$b_{i,j}$ & The binary decision variable used to formulate the collision avoidance constrains between $CAV_i$ and $CAV_j$.\\

$\Delta_{t,1}$, $\Delta_{t,2}$ & The minimal allowable safe gaps used in constraints (\ref{equ3})-(\ref{equ5}).\\

$l_c$ & The length of the control area.\\

$\mathbb{N}_1$, $\mathbb{N}_2$, $\mathbb{N}_3$, $\mathbb{N}_4$ & The vehicles identity sets of lane 1, lane 2, lane 3 and lane 4, respectively.\\

$\mathbb{T}^{\text{assign}}$ & The set of the arrival time assigned to all vehicles.\\

$\mathbb{B}$ & The binary decision variable set.\\
\Xhline{0.5pt}
\multirow{5}{*}{$s_{r}\left(\begin{array}{ll}n_1 & n_2 \\ n_3 & n_4\end{array}\right)$ }
& $n_1$: The accumulated number of vehicles added to the current candidate sequence on lane 1 up to current state.\\
& $n_2$: The accumulated number of vehicles added to the current candidate sequence on lane 2 up to current state.\\
& $p_3$: The accumulated number of vehicles added to the current candidate sequence on lane 3 up to current state.\\
& $n_4$: The accumulated number of vehicles added to the current candidate sequence on lane 4 up to current state.\\
& $r$: The lane number that has been assigned with the right of way at current state.\\
\Xhline{1.2pt}
\end{tabular}
\label{table1}
\end{table*}
\subsection{Cooperative Driving Problem}
\label{sec2-B}
\par As illustrated in \cite{MengAnalysis,LiCooperative}, cooperative driving aims to improve traffic efficiency and safety by planning the sequence of vehicles passing through the conflict area. To reach this goal, we formulate the objective function as
\begin{subequations}
\begin{equation}
    \min_{\mathbb{T}^{\text {assign}}} J,
\label{equ1a}
\end{equation}
\begin{equation}
J=\max_{i \in \mathbb{N}_1\mathop{\cup}\mathbb{N}_2\mathop{\cup}\mathbb{N}_3\mathop{\cup}\mathbb{N}_4} t_{\text{assign},i},
\label{equ1b}
\end{equation}
\begin{equation}
t_{\text{assign},i} \in \mathbb{T}^{\text {\text{assign}}},
\label{equ1c}
\end{equation}
\end{subequations}
\par\noindent where the decision variable $t_{\text{assign},i}$ denotes the arrival time assigned to $CAV_i$ to move into the conflict area, and $\mathbb{T}^{\text {assign}}$ is the set of the arrival time assigned for all vehicles within the control area. Obviously, $J$ denotes the total passing time of all vehicles to pass through the conflict area \cite{21,15}.
\par The arrival time assigned to each vehicle needs to satisfy vehicle dynamics, i.e.,
\begin{equation}
    t_{\text{assign},i} \geq t_{\min,i},
\label{equ2}
\end{equation}
where $t_{min,i}$ denotes the minimal arrival time assigned to vehicle $i$, which can be easily calculated according to vehicle dynamics \cite{15,16}.
\par As for the vehicles moving on the same lane, the arrival time assigned to each vehicle should satisfy the physical constraints to avoid rear-end collision, i.e.,
\begin{equation}
    t_{\text{assign},i} - t_{\text{assign},j} \geq \Delta_{t,1},
\label{equ3}
\end{equation}
where $\Delta_{t,1}$ denotes the minimal safe gap between two consecutive vehicles to avoid rear-end collision. In addition, vehicle $CAV_j$ is physically ahead of $CAV_i$. For example, $CAV_A$ and $CAV_D$ in Fig. \ref{fig1}.
\par The vehicles from different lanes may have collisions in the conflict area. The method to avoid these collisions is to schedule the vehicles moving into the conflict area sequentially. Thus, the binary variables are introduced to describe the constraints of vehicle sequence, i.e.,
\begin{equation}
    t_{\text{assign},i} - t_{\text{assign},j} + \bold{M} \cdot b_{i,j} \geq \Delta_{t,2},
\label{equ4}
\end{equation}
\begin{equation}
    t_{\text{assign},j} - t_{\text{assign},i} + \bold{M} \cdot (1-b_{i,j}) \geq \Delta_{t,2},
\label{equ5}
\end{equation}
where $\Delta_{t,2}$ denotes the minimal safe gap to avoid collisions between vehicles from different lanes. $CAV_i$ and $CAV_j$ are two vehicles from different lanes and have conflicts, e.g., $CAV_A$ and $CAV_B$ in Fig. \ref{fig1}. $\bold{M}$ is a positive and sufficiently large number. $b_{i,j}$ is the binary variable and we use $\mathbb{B}$ to denote the set of all binary variables. Obviously, $b_{i,j}=1$ implies that vehicle $i$ enters conflict area earlier than vehicle $j$.
\par Based on the above descriptions, the planning problem of vehicle sequence can be mathematically formulated as a mixed-integer programming problem (MIP), as shown in (\ref{equ6}).
\begin{equation}\begin{array}{c}
\min_{\mathbb{T}^{\text {assign}}, \mathbb{B}} J \\
\emph{s.t.}(2)(3)(4)(5)
\label{equ6}
\end{array}\end{equation}
\subsection{Problem Challenges}
\par Compared with the on-ramp merging problem that has been widely investigated, there are two major challenges of solving the cooperative driving problem at signal-free intersections:
\subsubsection{Increased number of conflicts} the number of conflicts between vehicles has increased a lot. As shown in Fig. \ref{fig1}, each red dot represents one kind of conflict pairs, and each conflict between vehicles will introduce a binary variable in the problem (\ref{equ6}). Thus, the size of solution space increases exponentially with the increasing number of conflicts \cite{2,Chen2016,Beale1979Branch,Rios-Torres2017}.
\begin{figure}[t]
\centering
\includegraphics[width=3.3in]{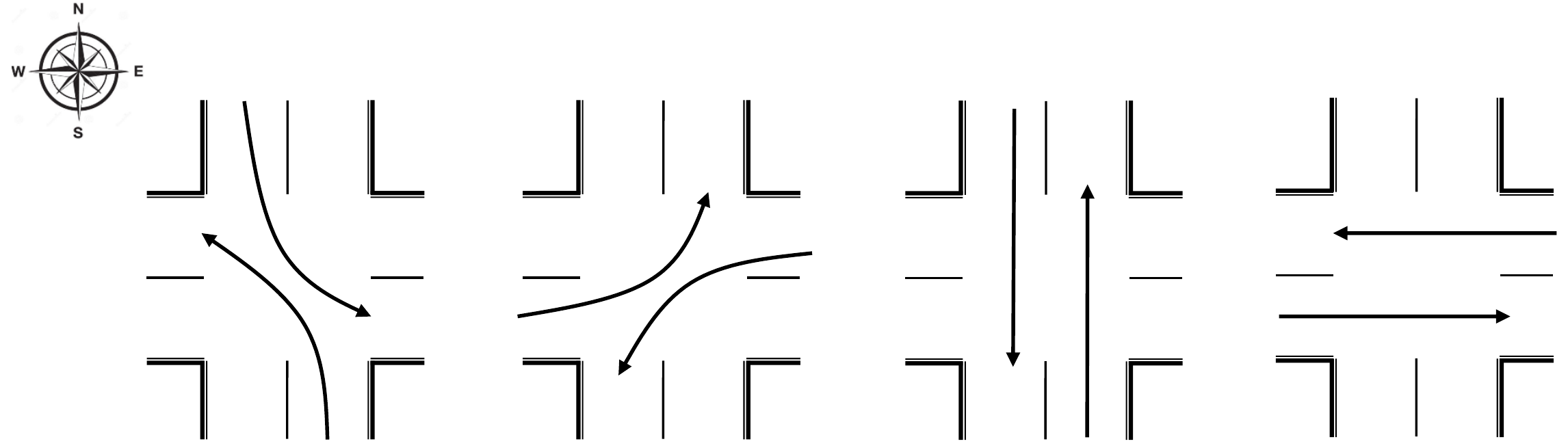}
\caption{Four kinds of collision-free pairs of vehicles at an intersection.}
\label{fig2}
\end{figure}
\subsubsection{Heterogeneous conflict relations} besides the conflict pairs of vehicles, there also exist several collision-free pairs of vehicles shown in Fig. \ref{fig2}, where the corresponding vehicles are allowed to move into the conflict area at the same time. The heterogeneous conflict relations between vehicles further increase the number of feasible solutions and lead to state-space explosion for the state-based methods, e.g., \cite{LiCooperative}.
\par To this end, it is necessary to construct a new cooperative driving strategy to overcome the above challenges and then guarantee computational efficiency and global optimality.
\section{Optimal Cooperative Driving Design}
\label{sec3}
\par In this section, we will reformulate problem (\ref{equ6}) based on dynamic programming to implement optimal cooperative driving with polynomial computational complexity. The key idea is to construct a small-size state space to describe the complete solution space of the large-scale planning problem (\ref{equ6}) and then search for the globally optimal solution by dynamic programming. An intersection scenario with five vehicles shown in Fig. \ref{fig3-new}(a) is taken as an example to introduce our method.
\par The rest of \emph{Section \Rmnum{3}} is organized as follows. The state space describing the complete solution space is constructed in \emph{Section \Rmnum{3}-A}, \emph{Section \Rmnum{3}-B} and \emph{Section \Rmnum{3}-C}. \emph{Section \Rmnum{3}-D} introduces the method to search the globally optimal solution inside the constructed state space. In \emph{Section \Rmnum{3}-E}, state space construction process and solution searching method are integrated into one overall optimal cooperative driving algorithm to save the computational resources.
\subsection{State Definition}
\par The problem of planning vehicle sequence passing through the conflict area is equivalent to sequentially assign the right of way for each vehicle to move into the conflict area. Thus, the problem (\ref{equ6}) can be reformulated as a multi-stage decision process, where the vehicles are sequentially added to the current candidate sequence and only one vehicle is tackled in each stage. Then, as shown in Fig. \ref{fig3-new}(b), state variable is defined to describe the assignment of the right of way, i.e.,
\begin{equation}
    s_{r}\left(\begin{array}{ll}n_1 & n_2 \\ n_3 & n_4\end{array}\right),
\label{equ7}
\end{equation}
where $n_1$, $n_2$, $n_3$ and $n_4$ denote the accumulated number of vehicles added to current candidate sequence in lane 1, 2, 3 and 4 up to current state, respectively. $r$ denotes the lane identity of the vehicle which obtains the right of way at current state. For instance, $s_{3}\begin{pmatrix}\begin{smallmatrix}1 & 0 \\ 2 & 0\end{smallmatrix}\end{pmatrix}$ denotes that one vehicle in lane 1 and two vehicles in lane 3 have been tackled up to current state. In addition, the vehicle with the right of way at current state can be exactly specified. For instance, $s_{3}\begin{pmatrix}\begin{smallmatrix}1 & 0 \\ 2 & 0\end{smallmatrix}\end{pmatrix}$ denotes that the second vehicle in lane 3 ($r=3$, $n_r=2$) has the right of way at the current state, i.e., $CAV_D$.
\subsection{Basic State Transition}
\par State transition is used to describe the transition of the right of way between vehicles. We use the lane identity as the decision variable $u$ $(u \in \{1,2,3,4\})$ to determine which lane has obtained the right of way at the current stage so that one vehicle on the corresponding lane is added to the candidate sequence. The state transition function emerges after introducing the state variable and decision variable, i.e.,
\begin{equation}
    s_{r}\begin{pmatrix}\begin{smallmatrix}n_1 & n_2 \\ n_3 & n_4\end{smallmatrix}\end{pmatrix}=g\left(s_{r'}\begin{pmatrix}\begin{smallmatrix}n_1' & n_2' \\ n_3' & n_4'\end{smallmatrix}\end{pmatrix},u\right),
\label{equ8}
\end{equation}
where $s_{r'}\begin{pmatrix}\begin{smallmatrix}n_1' & n_2' \\ n_3' & n_4'\end{smallmatrix}\end{pmatrix}$ is the predecessor state of $s_{r}\begin{pmatrix}\begin{smallmatrix}n_1 & n_2 \\ n_3 & n_4\end{smallmatrix}\end{pmatrix}$. The function $g(\cdot)$ denotes the state transition function, i.e., if $u=i$, we have $n_i = n_i'+1$, $r = i$ and other parameters of the state variable stay the same. For example, as shown in Fig. \ref{fig3-new}(c), the initial state $s_{0}\begin{pmatrix}\begin{smallmatrix}0 & 0 \\ 0 & 0\end{smallmatrix}\end{pmatrix}$ connects to different successor states according to different values of the decision variable.
\par A new vehicle will be assigned an arrival time to the conflict area if it obtains the right of way during a state transition. Thus, the objective value $J$ varies with state transition. As for the conflict pairs of vehicles, the arrival time assigned to the vehicle specified by current state is always larger than that assigned to the vehicle specified by its predecessor state, because the vehicle specified by the predecessor state has the priority over the vehicle specified by current state to enter the conflict area during each state transition. Therefore, the objective value $J$ up to current state is exactly equal to the time assigned to the vehicle specified by current state, i.e.,
\begin{equation}
J(s)=t_{\text{assign},s},
\label{equ9'}
\end{equation}
where $J(s)$ denotes the objective value $J$ up to state $s$, and $t_{\text{assign},s}$ denotes the arrival time assigned to the vehicle specified by state $s$. Then, considering constraints (\ref{equ2})-(\ref{equ5}), the recurrence function of objective value $J$ during state transition can be summarized as
\begin{equation}
\label{equ9}
\begin{aligned}
    J(s)&=t_{\text{assign},s}\\&=\max(t_{\min,s},t_{\text{assign},s'}+\Delta_T)\\&=\max(t_{\min,s},J\left(s'\right)+\Delta_T),
\end{aligned}
\end{equation}
where $t_{\min,s}$ denotes the minimal arrival time constrained by vehicle dynamics of the vehicle specified by state $s$, and $s'$ is the predecessor state of $s$. If the vehicles specified by $s$ and $s'$ are in the same lane, $\Delta_T=\Delta_{t,1}$, otherwise, $\Delta_T=\Delta_{t,2}$.
\par According to Eq. (\ref{equ9}), we can find that the objective value $J(s)$ is directly determined by its predecessor state $s'$ and has nothing with other states before the predecessor state. In other words, state transition satisfies Markov property when there exist conflicts between the vehicles specified by the corresponding two states, which is a compulsory requirement of using dynamic programming to solve a multi-stage decision problem \cite{IEEEexample:Baldwin2009Principles, denardo2012dynamic}.
\par However, as for the conflict-free pairs of vehicles presented in \emph{Section \Rmnum{2}-C}, the objective value $J$ up to current state has nothing with that of its predecessor state, because they are not conflict with each other. As shown in Fig. \ref{fig3-new}(d),  $CAV_C$ specified by $s_{1}\begin{pmatrix}\begin{smallmatrix}1 & 1 \\ 2 & 0\end{smallmatrix}\end{pmatrix}$ and $CAV_D$ specified by $s_{3}\begin{pmatrix}\begin{smallmatrix}0 & 1 \\ 2 & 0\end{smallmatrix}\end{pmatrix}$ are the so called conflict-free pairs so that the arrival time assigned to $CAV_C$ is determined by that assigned to the vehicles other than $CAV_D$. In other words, $J\left(s_{1}\begin{pmatrix}\begin{smallmatrix}1 & 1 \\ 2 & 0\end{smallmatrix}\end{pmatrix}\right)$ is not determined by its predecessor state $s_{3}\begin{pmatrix}\begin{smallmatrix}0 & 1 \\ 2 & 0\end{smallmatrix}\end{pmatrix}$, but by the states before $s_{3}\begin{pmatrix}\begin{smallmatrix}0 & 1 \\ 2 & 0\end{smallmatrix}\end{pmatrix}$. Therefore, the state transition cannot satisfy Markov property and thus invalidates dynamic programming in this problem. To this end, it is necessary to reconstruct state transition function to overcome the challenges from heterogeneous conflict relations.
\begin{figure*}
\centering
\includegraphics[width=7.2in]{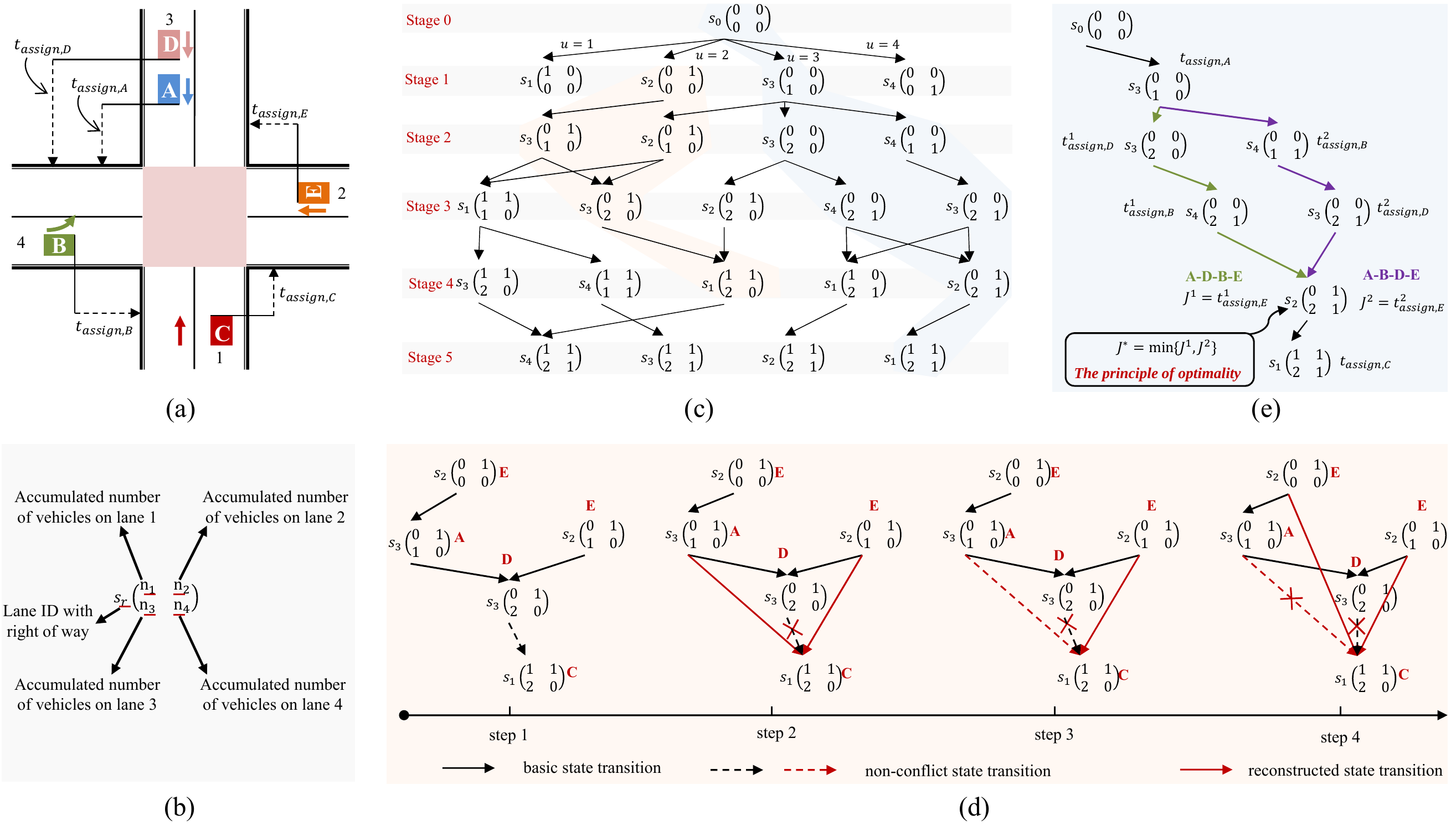}
\caption{Optimal cooperative driving strategy applied to a simple signal-free intersection scenario with five vehicles. Fig. 3(a) presents the studied scenario. Fig. 3(b) gives the definition of state variable. Fig. 3(c) presents a part of state space under basic state transition, and many states and state transitions are omitted for simplicity. Fig. 3(d) illustrates how to reconstruct the state transitions. Fig. 3(e) is the decision-making process.}
\label{fig3-new}
\end{figure*}
\subsection{State Transition Reconstruction}
\par In this subsection, we will reconstruct the state transition function to keep Markov property then reactivate dynamic programming for cooperative driving at signal-free intersections.
\par To keep Markov property, the non-conflict transitions (i.e., the transitions which connect the states that are not in conflict with each other) should be removed from the state space. To guarantee global optimality, we need to build new transitions for the corresponding states to completely describe the assignment of the right of way. Firstly, we need to find the non-conflict transitions in state space and then, for each non-conflict state transition, we need to remove it and connect current state with all predecessor states of its original predecessor state. Repeat the above steps until there is no non-conflict transition in state space. The reconstruction strategy guarantees that each state transition represents one kind of conflict between vehicles and all possible assignments of the right of way are presented in state space.
\par Here we illustrate our key idea in Fig. \ref{fig3-new}(d) and will elaborate the implementation algorithm in \emph{Section \Rmnum{3}-E}. Firstly, we find that the transition between state $s_{3}\begin{pmatrix}\begin{smallmatrix}0 & 1 \\ 2 & 0\end{smallmatrix}\end{pmatrix}$ and $s_{1}\begin{pmatrix}\begin{smallmatrix}1 & 1 \\ 2 & 0\end{smallmatrix}\end{pmatrix}$ is the so called non-conflict transition as $CAV_D$ and $CAV_C$ can enter the conflict area at the same time, and thus we need to remove this transition and establish new transitions for $s_{1}\begin{pmatrix}\begin{smallmatrix}1 & 1 \\ 2 & 0\end{smallmatrix}\end{pmatrix}$. Secondly, the non-conflict transition between $s_{3}\begin{pmatrix}\begin{smallmatrix}0 & 1 \\ 2 & 0\end{smallmatrix}\end{pmatrix}$ and $s_{1}\begin{pmatrix}\begin{smallmatrix}1 & 1 \\ 2 & 0\end{smallmatrix}\end{pmatrix}$ is removed, and $s_{1}\begin{pmatrix}\begin{smallmatrix}1 & 1 \\ 2 & 0\end{smallmatrix}\end{pmatrix}$ connects to the predecessor states of $s_{3}\begin{pmatrix}\begin{smallmatrix}0 & 1 \\ 2 & 0\end{smallmatrix}\end{pmatrix}$, i.e., $s_{3}\begin{pmatrix}\begin{smallmatrix}0 & 1 \\ 1 & 0\end{smallmatrix}\end{pmatrix}$ and $s_{2}\begin{pmatrix}\begin{smallmatrix}0 & 1 \\ 1 & 0\end{smallmatrix}\end{pmatrix}$. Thirdly, we find that one of the reproduced transitions is still the non-conflict transition, i.e., the transition between $s_{3}\begin{pmatrix}\begin{smallmatrix}0 & 1 \\ 1 & 0\end{smallmatrix}\end{pmatrix}$ and $s_{1}\begin{pmatrix}\begin{smallmatrix}1 & 1 \\ 2 & 0\end{smallmatrix}\end{pmatrix}$. Finally, the non-conflict transition between $s_{3}\begin{pmatrix}\begin{smallmatrix}0 & 1 \\ 1 & 0\end{smallmatrix}\end{pmatrix}$ and $s_{1}\begin{pmatrix}\begin{smallmatrix}1 & 1 \\ 2 & 0\end{smallmatrix}\end{pmatrix}$ is removed, and $s_{1}\begin{pmatrix}\begin{smallmatrix}1 & 1 \\ 2 & 0\end{smallmatrix}\end{pmatrix}$ connects to the predecessor state of $s_{3}\begin{pmatrix}\begin{smallmatrix}0 & 1 \\ 1 & 0\end{smallmatrix}\end{pmatrix}$, i.e., $s_{2}\begin{pmatrix}\begin{smallmatrix}0 & 1 \\ 0 & 0\end{smallmatrix}\end{pmatrix}$. Obviously, the non-conflict transition between $s_{3}\begin{pmatrix}\begin{smallmatrix}0 & 1 \\ 2 & 0\end{smallmatrix}\end{pmatrix}$ and $s_{1}\begin{pmatrix}\begin{smallmatrix}1 & 1 \\ 2 & 0\end{smallmatrix}\end{pmatrix}$ is modified to guarantee Markov property and global optimality. Compared with the basic state transition in \emph{Section \Rmnum{3}-B}, the reconstruction strategy have following new properties.
\par\emph{Property 1: The state transition is not just between the adjacent stages. Thus, one state transition does not necessarily represent only one vehicle that obtains the right of way while it may represent a group of vehicles in which there is no conflict of different directions. As shown in Fig. \ref{fig3-new}(d) step 4, the transition between $s_{2}\begin{pmatrix}\begin{smallmatrix}0 & 1 \\ 0 & 0\end{smallmatrix}\end{pmatrix}$ and $s_{1}\begin{pmatrix}\begin{smallmatrix}1 & 1 \\ 2 & 0\end{smallmatrix}\end{pmatrix}$ represents that $CAV_A$, $CAV_D$ and $CAV_C$ are assigned with the right of way during current state transition.}
\par\emph{Property 2: Removing one state transition may reproduce multiple transitions. It increases the number of transitions in the reconstructed state space, and theoretical analysis about this change will be presented in \emph{Section \Rmnum{4}}.}

\par By Property 1, the recurrence function of objective value $J$ needs to be modified based on the reconstructed transitions. The objective value $J$ up to current state should consider the assigned arrival time of a group of vehicles specified by the corresponding transition rather than only one vehicle. We assume that state $s'$ is the predecessor state of $s$ and use $\mathbb{G}$ to denote the set of vehicle identity of a group of vehicles that are assigned the right of way during the transition from $s'$ to $s$. For each vehicle in $\mathbb{G}$, the method to assign arrival time is
\begin{subequations}
\begin{equation}
\begin{aligned}
    t_{\text{assign},k} =\max(t_{\min,k},J (s')+\Delta_T,t_{\text{assign},k'}+\Delta_{t,1}),
\label{equ11a}
\end{aligned}
\end{equation}
\begin{equation}
k,k' \in \mathbb{G},
\label{equ11b}
\end{equation}
\end{subequations}
where $CAV_{k'}$ is physically ahead of $CAV_{k}$ in the same lane. If the vehicles specified by $s$ and $s'$ are in the same lane, $\Delta_T=\Delta_{t,1}$, otherwise, $\Delta_T=\Delta_{t,2}$. Then, we can obtain the objective value $J$ up to current state, i.e.,
\begin{equation}
J(s) = \max_{k \in \mathbb{G}} t_{\text{assign},k}.
\label{equ12'}
\end{equation}
Based on (\ref{equ11a}), (\ref{equ11b}) and (\ref{equ12'}), we can find that $J(s)$ is determined by its predecessor state $s'$ and has nothing with other states before state $s'$, i.e., the reconstructed state transition guarantee Markov property. For example, as shown in Fig. \ref{fig3-new}(d), the state transition between $s_{2}\begin{pmatrix}\begin{smallmatrix}0 & 1\\ 0 & 0\end{smallmatrix}\end{pmatrix}$ and state $s_{1}\begin{pmatrix}\begin{smallmatrix}1 & 1 \\ 2 & 0\end{smallmatrix}\end{pmatrix}$ represents that $CAV_A$, $CAV_D$ and $CAV_C$ are assigned with the arrival time at this time, i.e.,
\begin{equation}
\label{equ13'}
t_{\text{assign},A} = \max\left(t_{\min,A},J\left(s_{2}\begin{pmatrix}\begin{smallmatrix}0 & 1\\ 0 & 0\end{smallmatrix}\end{pmatrix}\right)+\Delta_{t,2}\right),
\end{equation}
\begin{equation}
\label{equ14'}
t_{\text{assign},D} = \max(t_{\min,D},t_{\text{assign},A}+\Delta_{t,1}),
\end{equation}
\begin{equation}
\label{equ15'}
t_{\text{assign},C} = \max\left(t_{\min,C},J\left(s_{2}\begin{pmatrix}\begin{smallmatrix}0 & 1\\ 0 & 0\end{smallmatrix}\end{pmatrix}\right)+\Delta_{t,2}\right),
\end{equation}
\begin{equation}
\label{equ16'}
J\left(s_{1}\begin{pmatrix}\begin{smallmatrix}1 & 1 \\ 2 & 0\end{smallmatrix}\end{pmatrix}\right) = \max(t_{\text{assign},A},t_{\text{assign},D},t_{\text{assign},C}).
\end{equation}
According to (\ref{equ13'})-(\ref{equ16'}), it can be found that $J\left(s_{1}\begin{pmatrix}\begin{smallmatrix}1 & 1 \\ 2 & 0\end{smallmatrix}\end{pmatrix}\right)$ is determined by $J\left(s_{2}\begin{pmatrix}\begin{smallmatrix}0 & 1 \\ 0 & 0\end{smallmatrix}\end{pmatrix}\right)$ and has nothing with other states before $s_{2}\begin{pmatrix}\begin{smallmatrix}0 & 1 \\ 0 & 0\end{smallmatrix}\end{pmatrix}$.
\par Based on the aforementioned descriptions, the constructed state space has following properties.
\par\emph{Property 3: There are many phenomena that different vehicle sequences reach the same state in the state space by determining the accumulated assigned number of vehicles in the specific lane, which leads to that the size of the state space is much smaller compared with those methods in which the vehicle sequence is directly used as the state \cite{LiCooperative,16}.}
\par\emph{Property 4: By determining the accumulated assigned number of vehicles in the specific lane, vehicles in the same lane always comply with the physical constraints (\ref{equ3}) during each state transition, which leads to that the infeasible sequences that violate the physical constraints are directly eliminated and the state space contains only all feasible solutions.}
\par Property 3 and Property 4 make it possible to describe the complete solution space of the large-scale planning problem (\ref{equ6}) utilizing a small-size state space. Theoretical analysis about the size of the state space will be presented in \emph{Section \Rmnum{4}}.
\subsection{Decision-Making Process}
\par In this subsection, we will introduce the method to search the globally optimal solution in the constructed state space.
\par As described in \emph{Section \Rmnum{3}-C}, state transition satisfy Markov property in the state space. Thus, \emph{the principle of optimality} can be adopted to search the globally optimal solution.
\par\noindent\emph{Lemma 1(the principle of optimality): if the optimal path of states passes through a particular state $S$ in the state space, the partial path from the initial state to state $S$ must also be the optimal path from the initial state to state $S$ \cite{IEEEexample:Baldwin2009Principles}}.
\par According to Eq. (\ref{equ11a})-(\ref{equ12'}) and Fig. \ref{fig3-new}(e), the objective value is updated and the corresponding vehicles can get the assigned arrival time during each state transition. If there is only one predecessor state of current state, the objective value of current state is determined by that of the unique predecessor state. However, if there are multiple predecessor states, we need to find the optimal objective value of current state according to Lemma 1, i.e.,
\begin{equation}
  J^* = \min_{i \in \mathbb{S}'} J^i,
  \label{equ17}
\end{equation}
 where $\mathbb{S}'$ denotes the set of all predecessor states of current state. $J^i$ denotes the candidate objective value of current state determined by its $i^{th}$ predecessor state, and $J^*$ is the optimal objective value of current state. According to Eq. (\ref{equ17}),  we can get the optimal objective value and record the optimal predecessor state of each state.
\par Based on the above descriptions, it can be easily concluded that the minimal objective value among all terminal states in the last stage is the globally optimal objective value of the problem (\ref{equ6}). Then, the optimal sequence can be obtained via a backtracking search from the optimal terminal state according to the recorded optimal predecessor state of each state.
\renewcommand{\algorithmicrequire}{\textbf{Input:}} 
\renewcommand{\algorithmicensure}{\textbf{Output:}} 
\begin{algorithm}[t]
\caption{Optimal Cooperative Driving Algorithm}
\begin{algorithmic}[1]
\Require
The driving states of all vehicles within control area.
\Ensure
The optimal arrival time assigned to each vehicle.
\For{each $i\in [1,N]$}
\For{each state $s_{1}\begin{pmatrix}\begin{smallmatrix}n_1 & n_2 \\ n_3 & n_4\end{smallmatrix}\end{pmatrix}$ in stage $i$}
\State $N_1=1$, $N_3=0$;
\State $j=n_1$;
\While {$V_{1,j-1}$ has the same intention as $V_{1,n_1}$}
\State $j=j-1$, $N_1=N_1+1$;
\EndWhile
\State $j=n_3$;
\While {$V_{3,j}$ has the same intention as $V_{1,n_1}$}
\State $j=j-1$, $N_3=N_3+1$;
\EndWhile
\State Connect $s_{1}\begin{pmatrix}\begin{smallmatrix}n_1 & n_2 \\ n_3 & n_4\end{smallmatrix}\end{pmatrix}$ with $s_{1}\begin{pmatrix}\begin{smallmatrix}n_1-1 & n_2 \\ n_3 & n_4\end{smallmatrix}\end{pmatrix}$;
\For {each $\overline n_3 \in [1,N_3]$}
\State Connect $s_{1}\begin{pmatrix}\begin{smallmatrix}n_1 & n_2 \\ n_3 & n_4\end{smallmatrix}\end{pmatrix}$ with $s_{1}\begin{pmatrix}\begin{smallmatrix}n_1-N_1 & n_2 \\ n_3-\overline n_3 & n_4\end{smallmatrix}\end{pmatrix}$;
\EndFor
\For {each $\overline n_1 \in [1,N_1]$}
\State Connect $s_{1}\begin{pmatrix}\begin{smallmatrix}n_1 & n_2 \\ n_3 & n_4\end{smallmatrix}\end{pmatrix}$ with $s_{3}\begin{pmatrix}\begin{smallmatrix}n_1-\overline n_1 & n_2 \\ n_3-N_3 & n_4\end{smallmatrix}\end{pmatrix}$;
\EndFor
\For{each $\overline n_1 \in [1,N_1]$ and each $\overline n_3 \in [0,N_3]$}
\State Connect $s_{1}\begin{pmatrix}\begin{smallmatrix}n_1 & n_2 \\ n_3 & n_4\end{smallmatrix}\end{pmatrix}$ with $s_{2}\begin{pmatrix}\begin{smallmatrix}n_1-\overline n_1 & n_2 \\ n_3-\overline n_3 & n_4\end{smallmatrix}\end{pmatrix}$;
\State Connect $s_{1}\begin{pmatrix}\begin{smallmatrix}n_1 & n_2 \\ n_3 & n_4\end{smallmatrix}\end{pmatrix}$ with $s_{4}\begin{pmatrix}\begin{smallmatrix}n_1-\overline n_1 & n_2 \\ n_3-\overline n_3 & n_4\end{smallmatrix}\end{pmatrix}$;
\EndFor
\State Update the objective value for each transition;
\State Record the optimal predecessor state;
\EndFor
\State Other states can be tackled in similar way when $r \neq 1$;
\EndFor
\State Obtain the optimal passing sequence via backtracking search and assign the optimal arrival time for each vehicle.
\end{algorithmic}
\end{algorithm}
\subsection{Algorithm for Implementation}
\par In the above subsections, we introduce our method in three steps for easier understanding, i.e., basic state transition, state transition reconstruction and decision-making process.  As commonly used in dynamic programming, these three steps can be integrated into one overall algorithm when implementing optimal cooperative driving to save computational resources. Specifically, instead of constructing the state space under basic state transition and then reconstructing the non-conflict transitions, we can directly construct the feasible state transitions by combining the basic state transition function and reconstruction strategy. At the same time, the currently optimal objective value of each state can be updated during each state transition. It avoids repeated computation, and the integrated algorithm is shown in Algorithm 1, where $V_{i,j}$ denotes the $j^{th}$ vehicle in lane $i$. $N$ denotes the total number of vehicles within the control area. The driving intention refers to whether the vehicle is going straight or turning left.
\par In Algorithm 1, for each state $s_{r}\begin{pmatrix}\begin{smallmatrix}n_1 & n_2 \\n_3 & n_4\end{smallmatrix}\end{pmatrix}$ in the state space, we need to find all possible predecessor states and then obtain the optimal $J\left(s_{r}\begin{pmatrix}\begin{smallmatrix}n_1 & n_2 \\n_3 & n_4\end{smallmatrix}\end{pmatrix}\right)$ and the optimal predecessor state of $s_{r}\begin{pmatrix}\begin{smallmatrix}n_1 & n_2 \\n_3 & n_4\end{smallmatrix}\end{pmatrix}$. Firstly, we need to determine the vehicles that may obtain the right of way during the state transitions connecting to state $s_{r}\begin{pmatrix}\begin{smallmatrix}n_1 & n_2 \\n_3 & n_4\end{smallmatrix}\end{pmatrix}$, i.e., a continuous queue of vehicles in lane $r$ with the same driving intention as $V_{r,n_r}$ starting from $V_{r,n_r}$ and a continuous queue of vehicles in lane $r_o$ with the same driving intention as $V_{r,n_r}$ starting from $V_{r_o,n_{r_o}}$, where $r_o$ denotes the lane identity of the opposite lane of lane $r$. There is no conflict of different directions among these vehicles and thus it is possible for them to get the right of way during the same state transition. Then, all the predecessor states of $s_{r}\begin{pmatrix}\begin{smallmatrix}n_1 & n_2 \\n_3 & n_4\end{smallmatrix}\end{pmatrix}$ can be divided into five categories according to the lane identity of the vehicle specified by each predecessor state. At the same time, the currently optimal $J\left(s_{r}\begin{pmatrix}\begin{smallmatrix}n_1 & n_2 \\n_3 & n_4\end{smallmatrix}\end{pmatrix}\right)$ is updated during each state transition and we can obtain the optimal predecessor state of $s_{r}\begin{pmatrix}\begin{smallmatrix}n_1 & n_2 \\n_3 & n_4\end{smallmatrix}\end{pmatrix}$. Finally, all the states in the state space can be tackled in the similar way and thus the optimal passing sequence can be exported through backtracking search.
\par Taking Fig. \ref{fig4-new} as the example to illustrate how to find all possible predecessor states of state $s_{1}\begin{pmatrix}\begin{smallmatrix}n_1 & n_2 \\ n_3 & n_4\end{smallmatrix}\end{pmatrix}$. Obviously, there are at most $N_1$ vehicles in lane 1 and $N_3$ vehicles in lane 3 that may obtain the right of way during the state transitions connecting to $s_{1}\begin{pmatrix}\begin{smallmatrix}n_1 & n_2 \\ n_3 & n_4\end{smallmatrix}\end{pmatrix}$. Then, there are five kinds of predecessor states of $s_{1}\begin{pmatrix}\begin{smallmatrix}n_1 & n_2 \\ n_3 & n_4\end{smallmatrix}\end{pmatrix}$: \textcircled{1} if the vehicle specified by the predecessor state is in lane 1 and only one vehicle in lane 1 can obtain the right of way, the predecessor state is $s_{1}\begin{pmatrix}\begin{smallmatrix}n_1-1 & n_2 \\ n_3 & n_4\end{smallmatrix}\end{pmatrix}$. \textcircled{2} if the vehicle specified by the predecessor state is in lane 1 and both the vehicles in lane 1 and lane 3 can obtain the right of way, the predecessor states can be described as $s_{1}\begin{pmatrix}\begin{smallmatrix}n_1-N_1 & n_2 \\ n_3-\overline n_3 & n_4\end{smallmatrix}\end{pmatrix}$, where $\overline n_3 \in [1,N_3]$. \textcircled{3} if the vehicle specified by the predecessor state is in lane 3, the predecessor states can be described as $s_{3}\begin{pmatrix}\begin{smallmatrix}n_1-\overline n_1 & n_2 \\ n_3-N_3 & n_4\end{smallmatrix}\end{pmatrix}$, where $\overline n_1 \in [1,N_1]$. \textcircled{4} if the vehicle specified by the predecessor state is in lane 2, the predecessor states can be described as $s_{2}\begin{pmatrix}\begin{smallmatrix}n_1-\overline n_1 & n_2 \\ n_3-\overline n_3 & n_4\end{smallmatrix}\end{pmatrix}$, where $\overline n_1 \in [1,N_1]$ and $\overline n_3 \in [0,N_3]$. \textcircled{5} if the vehicle specified by the predecessor state is in lane 4, the predecessor states can be described as $s_{4}\begin{pmatrix}\begin{smallmatrix}n_1-\overline n_1 & n_2 \\ n_3-\overline n_3 & n_4\end{smallmatrix}\end{pmatrix}$, where $\overline n_1 \in [1,N_1]$ and $\overline n_3 \in [0,N_3]$.
\begin{figure}[t]
\centering
\includegraphics[width=3.5in]{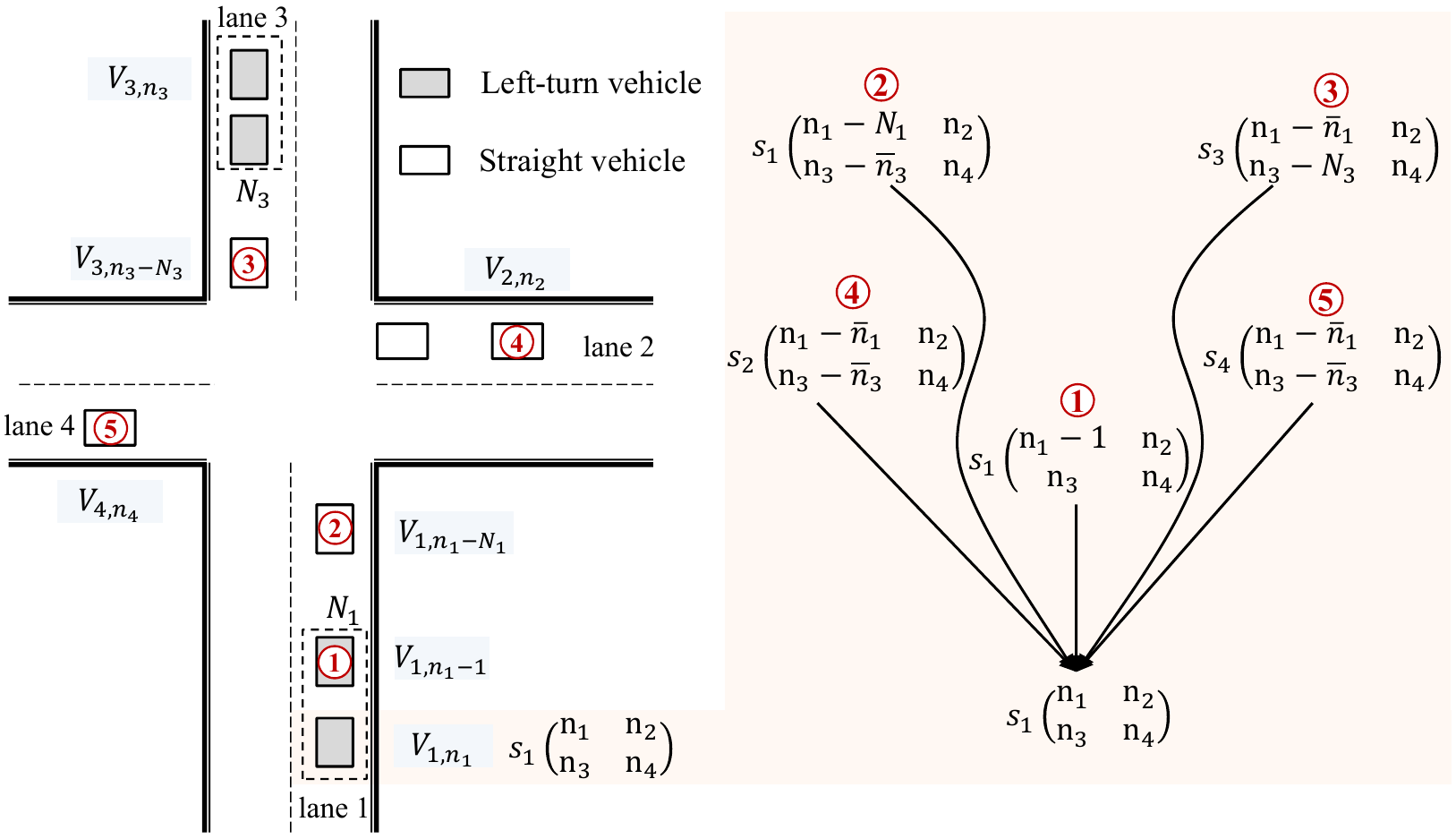}
\caption{An illustration of how to find all predecessor states for each state in Algorithm 1. There are five kinds of predecessor states.}
\label{fig4-new}
\end{figure}
\section{Analysis of Computational Complexity}
\label{sec5}
\par In this section, both the size of the constructed state space and the computational complexity of the proposed strategy are proved through theoretical analysis.
\subsection{Analysis of Number of States}
\label{sec5-A}
\par The theorem is proposed for the number of states.
\begin{theorem}
\emph{The total number of states is $O(N^4)$, where $N$ denotes the total number of vehicles within control area.}
\end{theorem}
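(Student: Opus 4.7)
The plan is to bound the number of states by simply counting the admissible 5-tuples $(r,n_1,n_2,n_3,n_4)$ that define a state and then expressing this count as a polynomial in $N=\hat n_1+\hat n_2+\hat n_3+\hat n_4$.

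First I would observe that, by the definition in Eq.~(\ref{equ7}), every state is uniquely specified by its lane index $r\in\{1,2,3,4\}$ (plus the single initial state with $r=0$) together with the counters $n_1,n_2,n_3,n_4$. Since each $n_i$ records the accumulated number of vehicles from lane $i$ already tackled, it must satisfy $0\le n_i\le \hat n_i$, so it takes at most $\hat n_i+1$ distinct values. Multiplying out, the number of distinct 4-tuples $(n_1,n_2,n_3,n_4)$ is at most $\prod_{i=1}^{4}(\hat n_i+1)$, and accounting for the at most five choices of $r$ gives an overall upper bound of $5\prod_{i=1}^{4}(\hat n_i+1)$ states.

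Next I would convert this product bound into an explicit polynomial in $N$. Since $\hat n_i\le N$ for every $i$, the trivial estimate $\prod_{i=1}^{4}(\hat n_i+1)\le (N+1)^4$ suffices; alternatively, AM--GM yields the slightly sharper bound $\prod_{i=1}^{4}(\hat n_i+1)\le \left(\tfrac{N+4}{4}\right)^4$. Either way, the state count is $O(N^4)$, which is what the theorem asserts.

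I do not anticipate a serious obstacle here: the bound follows directly from the definition of the state variable, because the state space is, by construction, a subset of a product of integer intervals. The only conceptual point worth being explicit about is that the reconstruction procedure in Section~III-C does not introduce any new states (it only rewires transitions between already-defined states), so the counting argument applies to both the basic and the reconstructed state spaces. With that remark in place, the proof reduces to the one-line inequality above.
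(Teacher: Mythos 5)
Your proof is correct, but it takes a coarser route than the paper. You bound the state count from above by the full product of ranges, $5\prod_{i=1}^{4}(\hat n_i+1)\le 5(N+1)^4$, which immediately gives $O(N^4)$. The paper instead performs an exact enumeration: it partitions the states according to how many of the counters $n_1,\dots,n_4$ are nonzero and uses the fact that the subscript $r$ must name a lane whose counter is positive (the vehicle holding the right of way at a state must come from a lane that has already contributed at least one vehicle), so that a tuple with $k$ nonzero counters admits only $k$ legal values of $r$, not $5$. This yields the closed form $N_s=4\hat n_1\hat n_2\hat n_3\hat n_4+3(\cdots)+2(\cdots)+(\hat n_1+\hat n_2+\hat n_3+\hat n_4)+1$, which under $\hat n_i=N/4$ equals $\frac{N^4}{64}+\frac{3N^3}{16}+\frac{3N^2}{4}+N+1$. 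What the exact count buys is a small explicit leading constant ($1/64$ rather than your $5$) and, implicitly, a matching $\Omega(N^4)$ lower bound showing the estimate is tight when the lanes are balanced; what your argument buys is brevity and robustness, since it needs nothing beyond the ranges of the counters. For the literal statement, which only asserts $O(N^4)$, your argument is sufficient, and your closing remark that the reconstruction of Section III-C rewires transitions without creating new states is a worthwhile observation that the paper leaves implicit.
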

\begin{proof}
    See \emph{Appendix \ref{appendix A}}.
\end{proof}
\par By Theorem 1, the number of states is a quartic polynomial with the number of vehicles. Thus, the size of state space grows slowly with the increasing number of vehicles.
\subsection{Analysis of Number of Transitions}
\label{sec5-B}
\par By Property 2 in \emph{Section \Rmnum{3}-C}, it can be concluded that the number of transitions varies with the conflict relations between vehicles within the conflict area. Thus, there exist the lower and upper bounds on the number of state transitions: if all the vehicles within the control area are in conflict with each other, the number of transitions reaches the lower bound, which is equal to that of the state space constructed under basic state transition; if each vehicle within the control area does not conflict with any vehicle on the opposite lane (e.g., all vehicles within the control area are with straight movements.), the number of transitions reaches to the upper bound, because each state has the largest number of predecessor states in such case. Thus, two theorems are proposed for the number of transitions.
\newtheorem{remark}{Remark}
\begin{theorem}
\emph{The lower bound on the number of transitions is $O(N^4)$, where $N$ denotes the total number of vehicles within control area.}
\end{theorem}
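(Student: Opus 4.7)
The plan is to exploit the hypothesis in the preamble of the theorem: the lower bound on the number of transitions is attained precisely in the all-conflict configuration, where the reconstruction procedure of Section III-C is vacuous and the transition set coincides with the one produced by the basic state transition function $g(\cdot)$ of Eq.~(8). Every state then has at most one incoming transition per lane, i.e., at most four predecessors in total, one for each value of the decision variable $u \in \{1,2,3,4\}$.

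First I would fix an all-conflict scenario, for instance by choosing the lane-movement assignments so that every cross-lane pair of vehicles falls into the red-dot conflict set of Fig.~1 and none of the four conflict-free patterns of Fig.~2 occurs. Under this assumption, the basic state transition function of Eq.~(8) generates every edge of the state space, and the reconstruction loop of Section III-C performs no removals at all, so no ``reproduced'' edges are added either.

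Next I would bound the number of transitions from both sides. For the upper bound, any non-initial state $s_{r}\begin{pmatrix}\begin{smallmatrix}n_1 & n_2 \\ n_3 & n_4\end{smallmatrix}\end{pmatrix}$ can only be reached from a predecessor obtained by decrementing one of the counters $n_i$ by one and setting the lane label to $i$, so its in-degree is bounded by the constant $4$. Multiplying this constant by the $O(N^4)$ state count from Theorem~1 yields an $O(N^4)$ bound on the total number of transitions. For the matching lower bound, every non-initial state has at least one predecessor since it is reachable from the initial state $s_{0}\begin{pmatrix}\begin{smallmatrix}0 & 0 \\ 0 & 0\end{smallmatrix}\end{pmatrix}$, so the number of transitions is at least the number of non-initial states, which is also $\Omega(N^4)$ by Theorem~1.

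The main obstacle will be to justify that no other conflict configuration yields strictly fewer transitions, so that the all-conflict case genuinely minimizes the transition count. This is exactly the content of Property~2 in Section III-C: removing a non-conflict edge introduces one new edge per predecessor of its source, an operation that never decreases the total edge count and in fact typically strictly increases it. Consequently, the all-conflict configuration is a minimizer, and combining the two-sided estimates above yields that the lower bound on the number of transitions is $O(N^4)$.
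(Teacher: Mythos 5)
Your proposal is correct and follows essentially the same route as the paper: both identify the all-conflict configuration as the minimizer (so the transition set is exactly the one generated by the basic state transition function), and both reduce the count to ``at most four predecessors per state'' times the $O(N^4)$ state count of Theorem~1. The only difference is in execution: the paper carries out an exact enumeration, starting from $4(N_s-1)$ and subtracting the infeasible transitions where some counter $n_i$ would have to go negative, whereas you settle for a two-sided asymptotic bound (in-degree at most $4$, at least $1$), which is coarser but entirely sufficient for the stated $O(N^4)$ conclusion.
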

\begin{proof}
    See \emph{Appendix \ref{appendix B-A}}.
\end{proof}
\begin{theorem}
\emph{The upper bound on the number of transitions is $O(N^6)$, where $N$ denotes the total number of vehicles within control area.}
\end{theorem}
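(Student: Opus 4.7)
The plan is to combine the state count from Theorem 1 with a per-state bound on the number of incoming transitions, exploiting the extremal scenario already identified in the paragraph preceding the theorem: the upper bound is attained when no vehicle conflicts with any vehicle on its opposite lane (for instance, when every vehicle is going straight), since in that regime each state has the largest feasible family of predecessor states. Under this extremal configuration, the continuous queues of compatible vehicles $N_1$ and $N_3$ introduced in Algorithm~1 are as long as possible, so the enumeration in Algorithm~1 produces the maximum number of predecessors.

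First, I would instantiate the five categories of predecessor states from Algorithm~1 for a generic state $s_{r}\begin{pmatrix}\begin{smallmatrix}n_1 & n_2 \\ n_3 & n_4\end{smallmatrix}\end{pmatrix}$ and count each one. Category~\textcircled{1} contributes a single predecessor. Category~\textcircled{2} contributes at most $N_3$ predecessors (indexed by $\overline n_3 \in [1,N_3]$). Category~\textcircled{3} contributes at most $N_1$ predecessors (indexed by $\overline n_1 \in [1,N_1]$). Categories~\textcircled{4} and~\textcircled{5} each contribute at most $N_1(N_3+1)$ predecessors, since they are indexed by the pair $(\overline n_1,\overline n_3) \in [1,N_1]\times[0,N_3]$. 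In the extremal scenario $N_1$ and $N_3$ are both $O(N)$, so the dominant categories are \textcircled{4} and \textcircled{5}, which jointly yield $O(N^2)$ predecessors per state.

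Second, I would multiply this per-state bound by the total number of states given in Theorem~1. Since every state contributes at most $O(N^2)$ incoming transitions and the state space has $O(N^4)$ elements, the total number of transitions is bounded by $O(N^4)\cdot O(N^2)=O(N^6)$. An analogous calculation applies symmetrically to states with $r=2,3,4$, so the bound holds uniformly across the state space.

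The main subtlety, rather than any heavy calculation, is justifying that the straight-only (or more generally, no-cross-lane-conflict) configuration actually maximises $N_1$ and $N_3$ simultaneously for every state and that the five categories in Algorithm~1 are exhaustive; once these two facts are in place, the multiplicative bound is immediate. I would therefore spend the bulk of the argument verifying that Algorithm~1's enumeration captures every reconstructed transition entering a state and that, in the extremal regime, the pair $(N_1,N_3)$ can indeed reach $\Theta(N)$ for states lying in the ``bulk'' of the space, so that the $O(N^6)$ bound is attained and not merely an over-count. Details are deferred to the appendix in the same manner as the earlier theorems.
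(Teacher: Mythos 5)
Your proposal is correct, but it takes a genuinely different route from the paper. You bound the total number of transitions multiplicatively: each state admits at most $O(N^2)$ predecessors (dominated by categories \textcircled{4} and \textcircled{5} of Algorithm~1, each indexed by a pair $(\overline n_1,\overline n_3)$ with $N_1,N_3 = O(N)$), and there are $O(N^4)$ states by Theorem~1, giving $O(N^4)\cdot O(N^2) = O(N^6)$. The paper instead performs an exact count: it writes $N_t = N_t^l - N_{t,\text{rem}} + N_{t,\text{rep}}$, computes the number of reproduced transitions entering a state as $\sum_{\delta_{n_1}=1}^{\hat n_1}\sum_{\delta_{n_3}=1}^{\hat n_3}\delta_{n_1}\delta_{n_3} = \hat n_1(\hat n_1+1)\hat n_3(\hat n_3+1)/4$ (and symmetrically for lanes 2 and 4), subtracts the removed non-conflict transitions, and arrives at an explicit degree-6 polynomial $N_t^u = N^6/4096 + \cdots$ under $\hat n_1=\hat n_2=\hat n_3=\hat n_4=N/4$. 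Your argument is shorter and more robust for the $O(\cdot)$ claim --- it holds for any conflict configuration without the bookkeeping of removed versus reproduced transitions, and it sidesteps the algebra entirely. What the paper's exact count buys is the precise leading coefficient and, implicitly, tightness of the bound in the extremal no-cross-conflict regime; your own closing worry about whether $O(N^6)$ is ``attained and not merely an over-count'' is exactly the gap that the paper's explicit polynomial closes, and settling it your way would require reconstructing essentially the paper's extremal-case enumeration. For the theorem as stated, however, only the upper bound is needed, so your multiplicative argument suffices provided you verify (as you note) that the five predecessor categories of Algorithm~1 exhaust all reconstructed transitions entering a state.
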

\begin{proof}
    See \emph{Appendix \ref{appendix B-B}}.
\end{proof}
\par By Theorem 2-3, both the lower and upper bounds on the number of state transitions are polynomial with the number of vehicles within the control area.
\subsection{Analysis of Computational Time Complexity}
\label{sec5-C}
\par The theorem is proposed for computational complexity.
\begin{theorem}
\emph{The proposed strategy has the polynomial-time complexity of computation, which has a lower bound $O(N^4)$ and an upper bound $O(N^6)$, where $N$ denotes the total number of vehicles within control area.}
\end{theorem}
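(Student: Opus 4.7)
The plan is to reduce Theorem 4 to the three preceding theorems by showing that the running time of Algorithm 1 is proportional, up to lower-order factors, to the total number of state transitions explored by the dynamic program. I would structure the argument in three phases that mirror the algorithm itself: (i) state enumeration, (ii) transition generation together with the DP update of $J(\cdot)$, and (iii) backtracking to recover the optimal arrival-time assignment.

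First, I would observe that the outer double loop of Algorithm 1 visits every state in the constructed state space exactly once. By Theorem 1, the number of states is $O(N^4)$, so the pure bookkeeping cost of iterating through states is $O(N^4)$. For each visited state $s_r\begin{pmatrix}\begin{smallmatrix}n_1 & n_2\\ n_3 & n_4\end{smallmatrix}\end{pmatrix}$, the algorithm first determines the two quantities $N_1, N_3$ that bound the sizes of the collision-free groups feeding into $s$; these can be computed in $O(N)$ time (and in fact amortize to $O(1)$ per state if precomputed). Then it enumerates all predecessor states in the five categories \textcircled{1}--\textcircled{5}, and for each transition it executes the Markov update described by Eqs.~(\ref{equ11a})--(\ref{equ12'}) and keeps the minimizer as in Eq.~(\ref{equ17}). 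The crucial observation is that each individual predecessor-to-successor update is $O(1)$ once the precomputed per-vehicle minima $t_{\min,\cdot}$ and the previous $J(s')$ are at hand: the group $\mathbb{G}$ associated with a transition can be processed by walking the two queues in lanes $r$ and $r_o$ with constant work per vehicle, and this walking cost can be charged against the creation of the transition itself.

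Second, with this charging scheme the total time spent on the DP updates is linear in the number of state transitions generated. By Theorems 2 and 3, this quantity is at least $O(N^4)$ and at most $O(N^6)$, which immediately yields matching bounds on the DP phase. Adding the $O(N^4)$ cost of simply enumerating the states and the $O(N)$ cost of the final backtracking pass (which traverses at most $N$ transitions from the optimal terminal state back to $s_0\begin{pmatrix}\begin{smallmatrix}0 & 0\\ 0 & 0\end{smallmatrix}\end{pmatrix}$) does not change either bound, so the overall running time satisfies $T(N)=\Omega(N^4)$ and $T(N)=O(N^6)$, which is exactly the claim.

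The main obstacle I expect is the careful accounting of the per-transition work in Eq.~(\ref{equ11a}), because the group $\mathbb{G}$ can in principle contain $\Theta(N)$ vehicles and a naive analysis would inflate the upper bound to $O(N^7)$. I would address this by observing that across all transitions incident to a fixed successor state $s$, each vehicle of $\mathbb{G}$ is visited at most a constant number of times per predecessor category (once the queue lengths $N_1, N_3$ are fixed), so summing the work per transition over all transitions is of the same asymptotic order as the number of transitions themselves. The remainder of the proof is then a direct substitution of the bounds from Theorems 1--3 into this accounting.
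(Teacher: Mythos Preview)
Your overall structure---reduce Theorem~4 to Theorems~1--3 by establishing $O(1)$ work per transition, then add the $O(N^4)$ state-enumeration cost and the $O(N)$ backtracking cost---is exactly the paper's route. The place where your argument breaks down is the amortization in the last paragraph. The claim that ``each vehicle of $\mathbb{G}$ is visited at most a constant number of times per predecessor category'' is false: for a fixed successor state $s$ with queue lengths $N_1,N_3$, the vehicle $V_{r,n_r}$ lies in $\mathbb{G}$ for \emph{every} category-\textcircled{4} and category-\textcircled{5} transition into $s$, and there are $N_1(N_3+1)$ transitions in each of those categories. Summing $|\mathbb{G}|$ over just the category-\textcircled{4} transitions into a single $s$ already gives
\[
\sum_{\overline n_1=1}^{N_1}\sum_{\overline n_3=0}^{N_3}(\overline n_1+\overline n_3)=\Theta(N_1^2 N_3+N_1 N_3^2),
\]
so walking the queues vehicle-by-vehicle yields $\Theta(N^7)$ total work in the worst case (all $N_1,N_3$ of order $N$), not $O(N^6)$.

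The paper closes this gap differently: instead of amortizing the walk, it eliminates it. Because $\mathbb{G}$ is always the union of two contiguous same-lane queues and the assigned times along a queue are nondecreasing, the maximum in Eq.~(\ref{equ12'}) is attained at the last vehicle of one of the two queues. The paper computes that last assigned time directly as $\max\bigl(t_{\text{assign},r,f}+(N_r-1)\Delta_{t,1},\,t_{\min,r,l}\bigr)$ (and similarly for the opposite lane), using only $J(s')$, the queue length, and the precomputed $t_{\min}$ of the first and last vehicles---a genuine $O(1)$ operation per transition with no amortization needed. Once you replace your charging argument with this closed-form update, the remainder of your proof goes through unchanged.
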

\begin{proof}
    See \emph{Appendix \ref{appendix C}}.
\end{proof}
\par By Theorem 4, the proposed optimal cooperative driving strategy can reduce the time complexity of computation from exponential to a small-degree polynomial.
\begin{remark}
In Appendix \ref{appendix C}, we get that each state transition will produce one computation with constant time. Then, it can be concluded that the time complexity of the proposed strategy is close to the lower bound in most cases, since there are many conflicts between vehicles in most cases and it is hard to satisfy the condition that each vehicle within the control area does not conflict with any vehicle on the opposite lane.
\end{remark}
\begin{remark}
There are a limited number of vehicles around an intersection so that the on-line computation can be realized as long as the time complexity of computation is a small-degree polynomial with the number of vehicles, and we do not need to pay much attention to whether the degree of a polynomial is 4 or 6.
\end{remark}
\begin{table}[h]
\renewcommand{\arraystretch}{1.3}
\centering
\caption{Parameters Setting in the Simulations.}
\centering
\begin{tabular}{c|c}
\Xhline{1.2pt}
{\bfseries Parameters} & {\bfseries Simulation setting}\\
\Xhline{1.0pt}
$a_\text{\text{max}}$,$a_{\text{min}}$ & $3 m/s^{2}$, $-5 m/s^{2}$\\
$v_{\text{max}}$,$v_{\text{min}}$ & $15m/s$, $0m/s$\\
$\qquad$$\qquad$ $l_c$ $\qquad$$\qquad$ & $\qquad$$\qquad$$\qquad$ $250m$ $\qquad$$\qquad$$\qquad$\\
$\Delta_{t,1}$,$\Delta_{t,2}$ & $1.5s$, $2s$\\
\Xhline{1.2pt}
\end{tabular}
\end{table}
\section{Simulation Results}
\label{sec6}
\subsection{Simulation Settings}
\label{sec6-A}
\par We design two kinds of simulations to verify the performance of the proposed optimal strategy. The first simulation aims to validate the global optimality of the proposed strategy. The second simulation aims to further demonstrate the performance of the proposed strategy in a continuous traffic process. In the simulations, we select the intersection shown in Fig. \ref{fig1} as the studied scenario. For the input lane of each direction, we assume that half of the vehicles will turn left and half will go straight.
\par As suggested in \cite{15,32}, the minimal safe gaps $\Delta_{t,1}$ and $\Delta_{t,2}$ to avoid collisions in conflict areas are set as $1.5s$ and $2s$, respectively. The length of control area $l_c$ is set as $250m$. In addition, the main parameter settings of the simulations are presented in TABLE II, and all simulations are carried out on the Visual Studio platform in a personal computer with an i7 CPU and a 16 GB RAM.
\par There are three kinds of performance indices evaluated in the simulations: \emph{\textcircled{1} total passing time:} it refers to the total time that all vehicles within the control area have passed the conflict area. In fact, the total passing time is exactly equal to the objective value of problem (\ref{equ6}) and can be calculated by (\ref{equ1b}). \emph{\textcircled{2} traffic throughput:} it refers to the total number of vehicles that have passed the conflict area in a specified period of time, which is utilized to evaluate the traffic efficiency in a continuous traffic process. \emph{\textcircled{3} average computation time:} it refers to the average computation time that is taken in a one-time planning procedure to get the vehicle sequence, which is used to evaluate the computational efficiency of strategies.
\par There are two strategies selected as the comparison strategies in the simulations: \emph{\textcircled{1} first-in-first-out (FIFO) strategy:} similar to \cite{15,16,24}, the typical FIFO strategy is utilized in the simulations, where the vehicles move into the conflict area following the sequence they enter the control area. Obviously, the time complexity of FIFO strategy is $O(N)$, where $N$ denotes the total number of vehicles within the control area. \emph{\textcircled{2} enumeration-based strategy:} it refers to the strategy that can obtain the globally optimal solution through enumerating all possible solutions, which is utilized to evaluate the global optimality of the proposed strategy. Obviously, the time complexity of enumeration-based strategy is $O(N!)$.
\subsection{Global Optimality Analysis}
\label{sec6-B}
\par In this simulation, we design an intersection scenario with $\alpha$ vehicles $(\alpha \in [5,24])$ that randomly distributed in the control area to verify the global optimality of the proposed strategy. Three strategies are applied in this scenario, i.e., FIFO strategy, enumeration-based strategy and the proposed strategy. The \emph{total passing time} and \emph{average computation time} are utilized as the performance indices in this simulation. For each $\alpha$, we simulate 20 times to take the average total passing time and computation time as shown in Fig. \ref{fig7} and TABLE \ref{Table3}. It should be pointed out that the computation time of enumeration-based strategy becomes extremely large when $\alpha > 18$. Thus, enumeration-based strategy is only applied in the scenarios where $\alpha \leq 18$ in this simulation.
\par According to Fig. \ref{fig7}, we can find that the total passing time of the proposed strategy is exactly equal to that of the enumeration-based strategy, which can obtain the globally optimal solution. Compared with the FIFO strategy, the total passing time of the proposed strategy is significantly decreased especially when the traffic volume is large. Furthermore, according to TABLE \ref{Table3}, the average computation time of the proposed strategy is short enough and can realize on-line computation in cooperative driving problem. However, the enumeration-based strategy is computationally intractable for a real-time implementation especially when there are a large number of vehicles within the control area.
\par Consequently, the proposed strategy can obtain the globally optimal solution with a limited computation time.
\begin{figure}[t]
\centering
\includegraphics[width=3in]{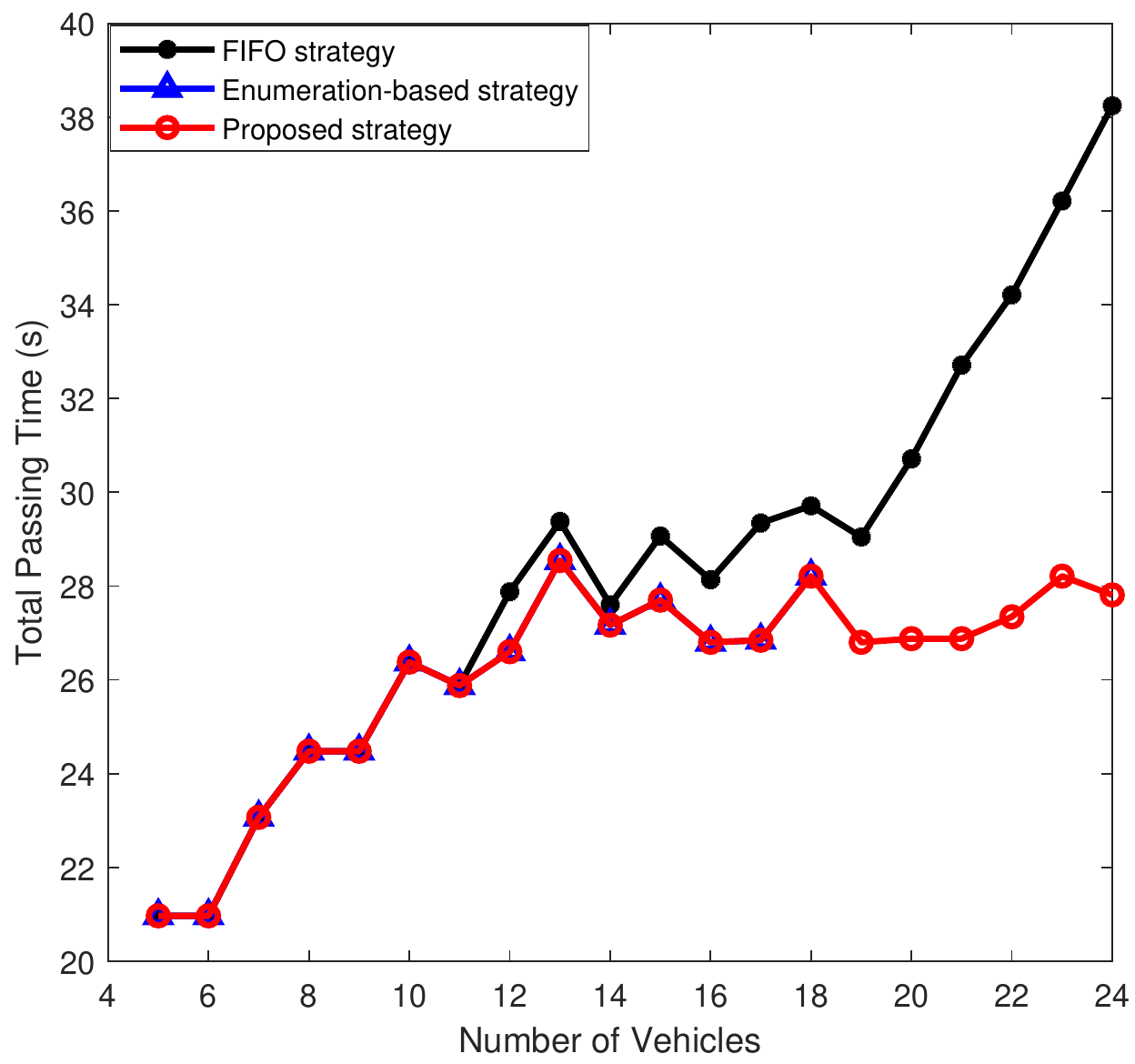}
\caption{The total passing time of different strategies with respect to the number of vehicles.}
\label{fig7}
\end{figure}
\begin{table}[t]
\renewcommand{\arraystretch}{1}
\centering
\caption{Average Computation Time of Different Strategies.}
\centering
\begin{tabular}{ccc}
\Xhline{1.2pt}
{\bfseries Number of} & \multirow{2}*{\bfseries Strategies} &\bfseries Average Computation\\
 {\bfseries Vehicles}&~&{\bfseries Time} $(ms)$\\
\Xhline{1.0pt}
\multirow{3}*{10} & {FIFO strategy} & {1}\\
~ & {Enumeration-based strategy} & {18} \\
~ & {Proposed strategy} & {3}\\
\hline
\multirow{3}*{14} & {FIFO strategy} & {1}\\
~ & {Enumeration-based strategy} & {3007} \\
~ & {Proposed strategy} & {6}\\
\hline
\multirow{3}*{18} & {FIFO strategy} & {1}\\
~ & {Enumeration-based strategy} & {1118876} \\
~ & {Proposed strategy} & {13}\\
\hline
\multirow{3}*{24} & {FIFO strategy} & {1}\\
~ & {Enumeration-based strategy} & {/} \\
~ & {Proposed strategy} & {21}\\
\Xhline{1.2pt}
\label{Table3}
\end{tabular}
\end{table}
\subsection{Comparison Results in Continuous Traffic Process}
\label{sec6-C}
\par In this simulation, the performance of the proposed strategy is further validated in a continuous traffic process. Similar to \cite{15,16}, we assume that the vehicles arrive in a Poisson Process at each input lane, and the average arrival rate of vehicles is denoted by $\lambda$ $veh/(lane \cdot h)$. In the continuous traffic process, the planning procedure is triggered when a new vehicle moves into the control area to reschedule the sequence of vehicles within the control area. In addition, similar to \cite{16,32}, we select the method introduced in \cite{31} to obtain the trajectory according to the assigned arrival time of vehicles.
\par Note that it is an extremely time-consuming process to obtain the globally optimal solution using the existing optimal strategies. Thus, the enumeration-based strategy is not adopted in this simulation and we select the FIFO strategy for comparisons. In addition, the \emph{traffic throughput} and \emph{average computation time} are selected as the performance indices in this simulation. For each parameter setting, we simulate a 10 minutes traffic process.
\begin{table}[t]
\renewcommand{\arraystretch}{1}
\centering
\caption{Comparison Results of Different Strategies.}
\centering
\begin{tabular}{cccc}
\Xhline{1.2pt}
{\bfseries Arrival Rate} & \multirow{2}*{\bfseries Strategies} & {\bfseries Traffic} & {\bfseries Computation}\\
 {$veh/(lane \cdot h)$} & ~ & {\bfseries Throughput} & {\bfseries Time} $(ms)$\\
\Xhline{1pt}
\multirow{2}*{400} & {FIFO strategy} & {230} & {1}\\
~ & {Proposed strategy} & {231} & {1}\\
\hline
\multirow{2}*{450} & {FIFO strategy} & {260} & {1}\\
~ & {Proposed strategy} & {261} & {2}\\
\hline
\multirow{2}*{500} & {FIFO strategy} & {259} & {1}\\
~ & {Proposed strategy} & {328} & {1}\\
\hline
\multirow{2}*{550} & {FIFO strategy} & {263} & {1}\\
~ & {Proposed strategy} & {353} & {8}\\
\hline
\multirow{2}*{600} & {FIFO strategy} & {258} & {1}\\
~ & {Proposed strategy} & {382} & {13}\\
\Xhline{1.2pt}
\label{Table4}
\end{tabular}
\end{table}
\begin{figure}[t]
\centering
\includegraphics[width=3in]{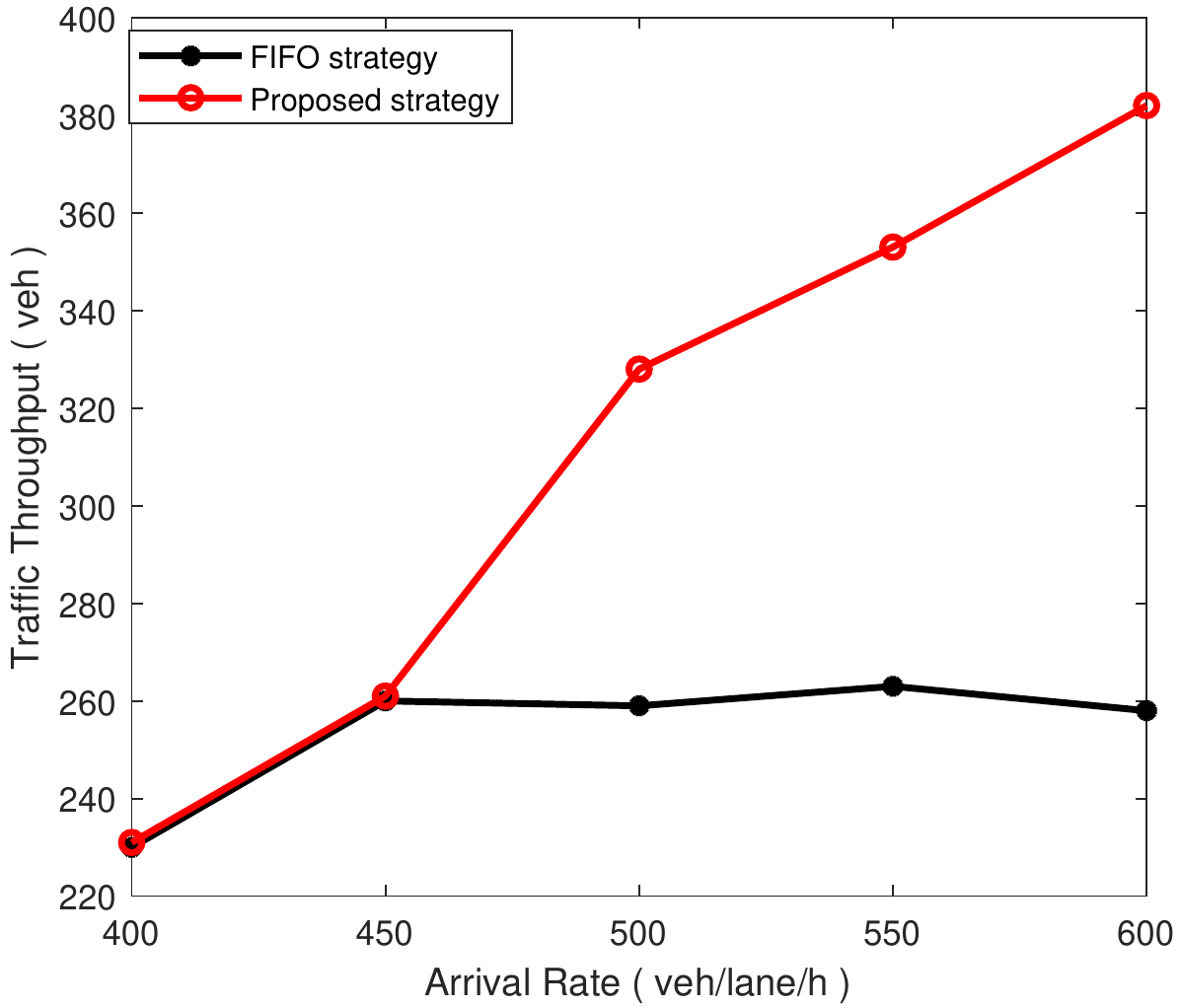}
\caption{Throughput of different strategies with respect to the arrival rate.}
\label{fig8}
\end{figure}
\par According to TABLE \ref{Table4} and Fig. \ref{fig8}, it is obvious that the proposed strategy outperforms FIFO strategy in all simulations in terms of traffic throughput, and the difference in traffic throughput of these two strategies increases with the arrival rate. Furthermore,  the traffic throughput of FIFO strategy reaches a saturation state as the arrival rate increases, since the traffic becomes seriously congested
and most vehicles block in the upstream when FIFO strategy is applied in the scenario and the arrival rate of vehicles is large. More significantly, the average computation time of the proposed strategy is short enough in all simulations, which can entirely guarantee on-line computation when implementing optimal cooperative driving.
\section{Conclusion}
\label{sec7}
\par In this paper, we study the problem of cooperative driving at signal-free intersections, aiming to obtain the globally optimal solution within an efficient computation time. Taking advantage of the conflict relations between vehicles, we show that the large-scale planning problem can be reformulated as a small-size multi-stage decision problem using the idea of dynamic programming, which can reduce the time complexity from exponential to a small-degree polynomial. In the future, the proposed strategy can be extended to other traffic scenarios (e.g., multi-intersection road networks) and even for the single machine total tardiness problem in operational research field.
\appendices
\section{Proof of Theorem 1}
\label{appendix A}
\begin{proof}
Recalling that the number of vehicles on lane 1, 2, 3 and 4 is denoted by $\hat n_1$, $\hat n_2$, $\hat n_3$ and $\hat n_4$ respectively, and assuming that $\hat n_1>0$, $\hat n_2>0$, $\hat n_3>0$ and $\hat n_4>0$. To simplify the descriptions, we give a symbolic function, i.e.,
\begin{equation}
\label{equ20}
\Gamma(x)=
\begin{cases}
0& x=0\\
1& x \neq 0
\end{cases}.
\nonumber
\end{equation}
 Then, the number of states in the solution space can be analyzed by categories. As for $\Gamma(n_1)+\Gamma(n_2)+\Gamma(n_3)+\Gamma(n_4)=0$, there is only one state satisfying condition, i.e., the initial state in stage 0. As for $\Gamma(n_1)+\Gamma(n_2)+\Gamma(n_3)+\Gamma(n_4)=1$, there are $(\hat n_1+\hat n_2+\hat n_3+\hat n_4)$ states satisfying condition, e.g., $s_{1}\begin{pmatrix}\begin{smallmatrix}1 & 0 \\ 0 & 0\end{smallmatrix}\end{pmatrix}$,
$s_{1}\begin{pmatrix}\begin{smallmatrix}2 & 0 \\ 0 & 0\end{smallmatrix}\end{pmatrix}$,
$s_{2}\begin{pmatrix}\begin{smallmatrix}0 & 1 \\ 0 & 0\end{smallmatrix}\end{pmatrix}$. As for $\Gamma(n_1)+\Gamma(n_2)+\Gamma(n_3)+\Gamma(n_4)=2$, there are $(2\hat n_1\hat n_2+2\hat n_1\hat n_3+2\hat n_1 \hat n_4+2\hat n_2\hat n_3+2\hat n_2\hat n_4+2\hat n_3\hat n_4)$ states satisfying condition, e.g., $s_{1}\begin{pmatrix}\begin{smallmatrix}1 & 1 \\ 0 & 0\end{smallmatrix}\end{pmatrix}$,
$s_{1}\begin{pmatrix}\begin{smallmatrix}1 & 0 \\ 1 & 0\end{smallmatrix}\end{pmatrix}$,
$s_{2}\begin{pmatrix}\begin{smallmatrix}0 & 1 \\ 1 & 0\end{smallmatrix}\end{pmatrix}$. As for $\Gamma(n_1)+\Gamma(n_2)+\Gamma(n_3)+\Gamma(n_4)=3$, there are $(3\hat n_1\hat n_2\hat n_3+3\hat n_1\hat n_2\hat n_4+3\hat n_1\hat n_3\hat n_4+3\hat n_2\hat n_3\hat n_4)$ states satisfying condition, e.g., $s_{1}\begin{pmatrix}\begin{smallmatrix}1 & 1 \\ 1 & 0\end{smallmatrix}\end{pmatrix}$,
$s_{1}\begin{pmatrix}\begin{smallmatrix}1 & 1 \\ 0 & 1\end{smallmatrix}\end{pmatrix}$,
$s_{2}\begin{pmatrix}\begin{smallmatrix}0 & 1 \\ 1 & 2\end{smallmatrix}\end{pmatrix}$. As for $\Gamma(n_1)+\Gamma(n_2)+\Gamma(n_3)+\Gamma(n_4)=4$, there are $(4\hat n_1\hat n_2\hat n_3\hat n_4)$ states satisfying condition, e.g., $s_{1}\begin{pmatrix}\begin{smallmatrix}1 & 1 \\ 1 & 1\end{smallmatrix}\end{pmatrix}$,
$s_{1}\begin{pmatrix}\begin{smallmatrix}1 & 1 \\ 2 & 1\end{smallmatrix}\end{pmatrix}$,
$s_{2}\begin{pmatrix}\begin{smallmatrix}3 & 1 \\ 1 & 2\end{smallmatrix}\end{pmatrix}$. Based on the above descriptions, we can obtain the total number of states $N_s$ in the state space as
\begin{equation}
\begin{split}
 N_{s}=4\hat n_1 \hat n_2 \hat n_3 \hat n_4 + 3(\hat n_1 \hat n_2 \hat n_3+\hat n_1 \hat n_2 \hat n_4 +\hat n_1 \hat n_3 \hat n_4\\ + \hat n_2 \hat n_3 \hat n_4)+ 2(\hat n_1 \hat n_2 + \hat n_1 \hat n_3+\hat n_1 \hat n_4+\hat n_2 \hat n_3\\ +\hat n_2 \hat n_4 + \hat n_3 \hat n_4)+ \hat n_1 + \hat n_2 + \hat n_3 +\hat n_4 + 1.
 \label{equ26}
 \end{split}
\nonumber
\end{equation}
\par Assuming that the total number of vehicles in control
area is $N(N \geq 4)$ and $\hat n_1=\hat n_2=\hat n_3=\hat n_4=\frac {N}{4}$, we have
\begin{equation}
 N_{s}=\frac{N^4}{64}+\frac{3N^3}{16}+\frac{3N^2}{4}+N+1.
\nonumber
\end{equation}
It is obvious that the number of states is $O(N^4)$.
\end{proof}
\vspace{-0.3cm}
\section{Proof of Theorem 2-3}
\par Recalling that the number of vehicles on lane 1, 2, 3 and 4 is denoted by $\hat n_1$, $\hat n_2$, $\hat n_3$ and $\hat n_4$ respectively, and assuming that $\hat n_1>0$, $\hat n_2>0$, $\hat n_3>0$ and $\hat n_4>0$.
\label{appendix B}
\vspace{-0.5cm}
\subsection{Proof of Theorem 2}
\label{appendix B-A}
\begin{proof} The number of transitions reaches the lower bound $N_t^l$ under basic state transition. Usually, each state has four predecessor states except the initial state, i.e.,
\begin{equation}
N_{t,\text{max}}^l = 4(N_{s}-1),
\label{equ27}
\nonumber
\end{equation}
where $N_{t,\text{max}}^l$ denotes the possible maximum of transitions under basic state transition. However, there exist several infeasible transitions, which need to be subtracted from $N_{t,\text{max}}^l$. As for such states as $s_{1}\begin{pmatrix}\begin{smallmatrix}n_1 & n_2 \\ n_3 & n_4\end{smallmatrix}\end{pmatrix}$, there are generally four kinds of predecessor states, i.e., $s_{1}\begin{pmatrix}\begin{smallmatrix}n_1-1 & n_2 \\ n_3 & n_4\end{smallmatrix}\end{pmatrix}$, $s_{2}\begin{pmatrix}\begin{smallmatrix}n_1-1 & n_2 \\ n_3 & n_4\end{smallmatrix}\end{pmatrix}$, $s_{3}\begin{pmatrix}\begin{smallmatrix}n_1-1 & n_2 \\ n_3 & n_4\end{smallmatrix}\end{pmatrix}$, and $s_{4}\begin{pmatrix}\begin{smallmatrix}n_1-1 & n_2 \\ n_3 & n_4\end{smallmatrix}\end{pmatrix}$. We now discuss whether these predecessor states exist in state space. \textcircled{1} if $n_1=1$, state $s_{1}\begin{pmatrix}\begin{smallmatrix}n_1-1 & n_2 \\ n_3 & n_4\end{smallmatrix}\end{pmatrix}$ is infeasible, where $n_2 \in [0,\hat n_2]$, $n_3 \in [0,\hat n_3]$ and $n_4 \in [0,\hat n_4]$. Thus, there are $[(\hat n_2+1)(\hat n_3+1)(\hat n_4+1)]$ states belonging to this category. \textcircled{2} if $n_2=0$, $s_{2}\begin{pmatrix}\begin{smallmatrix}n_1-1 & n_2 \\ n_3 & n_4\end{smallmatrix}\end{pmatrix}$ is infeasible, where $n_1 \in [1,\hat n_1]$, $n_3 \in [0,\hat n_3]$ and $n_4 \in [0,\hat n_4]$. Thus, there are $[\hat n_1(\hat n_3+1)(\hat n_4+1)]$ states belonging to this category. \textcircled{3} if $n_3=0$, $s_{3}\begin{pmatrix}\begin{smallmatrix}n_1-1 & n_2 \\ n_3 & n_4\end{smallmatrix}\end{pmatrix}$ is infeasible, where $n_1 \in [1,\hat n_1]$, $n_2 \in [0,\hat n_2]$ and $n_4 \in [0,\hat n_4]$. Thus, there are $[\hat n_1(\hat n_2+1)(\hat n_4+1)]$ states belonging to this category. \textcircled{4} if $n_4=0$, $s_{4}\begin{pmatrix}\begin{smallmatrix}n_1-1 & n_2 \\ n_3 & n_4\end{smallmatrix}\end{pmatrix}$ is infeasible, where $n_1 \in [1,\hat n_1]$, $n_2 \in [0,\hat n_2]$ and $n_3 \in [0,\hat n_3]$. Thus, there are $[\hat n_1(\hat n_2+1)(\hat n_3+1)]$ states of this category. In summary, as for such states as $s_{1}\begin{pmatrix}\begin{smallmatrix}n_1 & n_2 \\ n_3 & n_4\end{smallmatrix}\end{pmatrix}$, the number of infeasible transitions $N_{t,inf,1}$ that need to be subtracted from $N_{t,\text{max}}^l$ is
\begin{equation}
\begin{split}
N_{t,inf,1} = (\hat n_2+1)(\hat n_3+1)(\hat n_4+1) + \hat n_1(\hat n_3+1)(\hat n_4+1)\\ + \hat n_1(\hat n_2+1)(\hat n_4+1) + \hat n_1(\hat n_2+1)(\hat n_3+1)
 \label{equ28}
 \end{split}
 \nonumber
\end{equation}
\vspace{-0.5cm}
\par\noindent Similarly, as for such states as $s_{2}\begin{pmatrix}\begin{smallmatrix}n_1 & n_2 \\ n_3 & n_4\end{smallmatrix}\end{pmatrix}$, $s_{3}\begin{pmatrix}\begin{smallmatrix}n_1 & n_2 \\ n_3 & n_4\end{smallmatrix}\end{pmatrix}$, and $s_{4}\begin{pmatrix}\begin{smallmatrix}n_1 & n_2 \\ n_3 & n_4\end{smallmatrix}\end{pmatrix}$, the number of transitions that need to be subtracted from $N_{t,\text{max}}^l$ is
\begin{equation}
\begin{split}
N_{t,inf,2} = (\hat n_1+1)(\hat n_3+1)(\hat n_4+1) + \hat n_2(\hat n_3+1)(\hat n_4+1)\\ + \hat n_2(\hat n_1+1)(\hat n_4+1) + \hat n_2(\hat n_1+1)(\hat n_3+1),
 \label{equ29}
 \end{split}
 \nonumber
\end{equation}
\vspace{-0.5cm}
\begin{equation}
\begin{split}
N_{t,inf,3} = (\hat n_1+1)(\hat n_2+1)(\hat n_4+1) + \hat n_3(\hat n_2+1)(\hat n_4+1)\\ + \hat n_3(\hat n_1+1)(\hat n_4+1) + \hat n_3(\hat n_1+1)(\hat n_2+1),
 \label{equ30}
 \end{split}
 \nonumber
\end{equation}
\vspace{-0.5cm}
\begin{equation}
\begin{split}
N_{t,inf,4} = (\hat n_1+1)(\hat n_2+1)(\hat n_3+1) + \hat n_4(\hat n_2+1)(\hat n_3+1)\\ + \hat n_4(\hat n_1+1)(\hat n_3+1) + \hat n_4(\hat n_1+1)(\hat n_2+1).
 \label{equ31}
 \end{split}
 \nonumber
\end{equation}
\vspace{-0.5cm}
\par\noindent In addition, there are four states that directly connect to the initial state, i.e., $s_{1}\begin{pmatrix}\begin{smallmatrix}1 & 0 \\ 0 & 0\end{smallmatrix}\end{pmatrix}$, $s_{2}\begin{pmatrix}\begin{smallmatrix}0 & 1 \\ 0 & 0\end{smallmatrix}\end{pmatrix}$, $s_{3}\begin{pmatrix}\begin{smallmatrix}0 & 0 \\ 0 & 1\end{smallmatrix}\end{pmatrix}$ and $s_{4}\begin{pmatrix}\begin{smallmatrix}0 & 0 \\ 0 & 1\end{smallmatrix}\end{pmatrix}$, which produces four special transitions. Therefore, the lower bound is summarized as
\begin{equation}
\label{equ32}
\begin{split}
N_{t}^l = N_{t,\text{max}}^l - N_{t,inf,1} - N_{t,inf,2} - N_{t,inf,3} - N_{t,inf,4}+4\\=16\hat n_1 \hat n_2 \hat n_3 \hat n_4 + 8(\hat n_1 \hat n_2 \hat n_4+\hat n_1 \hat n_2 \hat n_4 + \hat n_1 \hat n_3 \hat n_4 + \hat n_2 \hat n_3 \hat n_4)\\ + 2(\hat n_1 \hat n_2 + \hat n_1 \hat n_3+ \hat n_1 \hat n_4 + \hat n_2 \hat n_3 + \hat n_2 \hat n_4 + \hat n_3 \hat n_4)\\ -2(\hat n_1 + \hat n_2 + \hat n_3 + \hat n_4)+4.
\end{split}
\nonumber
\end{equation}
\vspace{-0.5cm}
\par Assuming that the total number of vehicles in control
area is $N(N \geq 4)$ and $\hat n_1=\hat n_2=\hat n_3=\hat n_4=\frac {N}{4}$, we have
\begin{equation}
 N_{t}^{l}=\frac{N^4}{16}+\frac{N^3}{2}+\frac{3N^2}{4}-2N+4.
 \nonumber
\end{equation}
Thus, the lower bound on number of transitions is $O(N^4)$.
\end{proof}
\vspace{-0.5cm}
\subsection{Proof of Theorem 3}
\label{appendix B-B}
\begin{proof} According to Property 2 in \emph{Section \Rmnum{3}-C}, we know that removing one state transition may reproduce multiple transitions which do not connect the states in adjacent stages. Thus, the total number of transitions is
\begin{equation}
    N_{t} = N_{t}^l - N_{t,\text{rem}} + N_{t,\text{rep}},
\label{equ33}
\end{equation}
where $N_{t,\text{rem}}$ denotes the number of removed transitions, and $N_{t,\text{rep}}$ denotes the number of reproduced transitions.
\par According to Property 1 in \emph{Section \Rmnum{3}-C}, the reproduced transitions represent a group of vehicles which do not conflict with each other. Firstly, we aim to find the number of the reproduced transitions that connect to a particular state
$s_{r}\begin{pmatrix}\begin{smallmatrix}n_1^0+\delta_{n_1} & n_2 \\ n_3^0+\delta_{n_3} & n_4\end{smallmatrix}\end{pmatrix}$, where we assume that the $n_1^{th} (n_1^{th}\in [n_1^0+1,n_1^0+\delta_{n_1}])$ vehicle on lane 1 has no conflict with the $n_3^{th} (n_3^{th}\in [n_3^0+1,n_3^0+\delta_{n_3}])$ on lane 3. Thus, there are $\delta_{n_1}\delta_{n_3}$ reproduced transitions connecting to state $s_{r}\begin{pmatrix}\begin{smallmatrix}n_1^0+\delta_{n_1} & n_2 \\ n_3^0+\delta_{n_3} & n_4\end{smallmatrix}\end{pmatrix}$ when $n_2$, $n_4$ and $r$ are fixed. Secondly, according to \emph{Section \Rmnum{4}-B}, the total number of transitions reaches the upper bound $N_t^u$ when each vehicle within the control area does not conflict with any vehicle on the opposite lane. In such case, we aim to find the total number of reproduced transitions for state $s_{r}\begin{pmatrix}\begin{smallmatrix}n_1^0+\delta_{n_1} & n_2 \\ n_3^0+\delta_{n_3} & n_4\end{smallmatrix}\end{pmatrix}$, where $\delta_{n_1}$ and $\delta_{n_3}$ are set as different values. Obviously, we have $\delta_{n_1} \in [1,\hat n_1]$ and $\delta_{n_3} \in [1,\hat n_3]$. Thus, the total number of the reproduced transitions connecting to this kind of states where $n_2$, $n_4$ and $r$ are fixed is
\begin{equation}
    \sum_{\delta_{n_1}=1}^{\hat n_1} \sum_{\delta_{n_3}=1}^{\hat n_3} \delta_{n_1}\delta_{n_3}=\frac{\hat n_1(\hat n_1+1) \hat n_3(\hat n_3+1)}{4}.
\label{equ34}
\nonumber
\end{equation}
Then, we consider that $n_2$, $n_4$ and $r$ can set as different values. As for $n_2 \in [1,\hat n_2]$ and $n_4 \in [1,\hat n_4]$, $r \in \{2,4\}$. As for $n_2=0$ and $n_4 \in [1,\hat n_4]$, $r=4$. As for $n_4=0$ and $n_2 \in [1,\hat n_2]$, $r=2$. Therefore, for such states as $s_{r}\begin{pmatrix}\begin{smallmatrix}n_1^0+\delta_{n_1} & n_2 \\ n_3^0+\delta_{n_3} & n_4\end{smallmatrix}\end{pmatrix}$, the total number of reproduced transitions $N_{t,\text{rep},1}$ is
\begin{equation}
N_{t,\text{rep},1} = (2\hat n_2\hat n_4+\hat n_2+\hat n_4)\frac{\hat n_1(\hat n_1+1)\hat n_3(\hat n_3+1)}{4}.
\label{equ35}
\nonumber
\end{equation}
Similarly, for such states as $s_{r}\begin{pmatrix}\begin{smallmatrix}n_1 & n_2^0+\delta_{n_2} \\ n_3 & n_4^0+\delta_{n_4}\end{smallmatrix}\end{pmatrix}$, the total number of reproduced transitions is
\begin{equation}
N_{t,\text{rep},2} = (2\hat n_1\hat n_3+\hat n_1+\hat n_3)\frac{\hat n_2(\hat n_2+1)\hat n_4(\hat n_4+1)}{4}.
\label{equ36}
\nonumber
\end{equation}
In addition, there are $(2\hat n_1\hat n_3+2\hat n_2\hat n_4)$ reproduced transitions from the initial state to such states as $s_{r}\begin{pmatrix}\begin{smallmatrix}n_1 & 0 \\ n_3 & 0\end{smallmatrix}\end{pmatrix}$ and $s_{r}\begin{pmatrix}\begin{smallmatrix}0 & n_2 \\ 0 & n_4\end{smallmatrix}\end{pmatrix}$. Based on above descriptions, we have
\begin{equation}
    N_{t,\text{rep}} =  N_{t,\text{rep},1}+ N_{t,\text{rep},2}+2\hat n_1\hat n_3+2\hat n_2\hat n_4.
\label{equ37}
\end{equation}
Furthermore, it is easy to obtain $N_{t,\text{rem}}$, i.e.,
\begin{equation}
    N_{t,\text{rem}} = 2\hat n_1\hat n_3(\hat n_2+1)(\hat n_4+1)+2\hat n_2\hat n_4(\hat n_1+1)(\hat n_3+1).
\label{equ38}
\end{equation}
Finally, according to (\ref{equ33}), (\ref{equ37}) and (\ref{equ38}), we can obtain the upper bound on the number of transitions, i.e.,
\begin{equation}
\begin{split}
N_{t}^u = N_{t}^l + (2\hat n_2\hat n_4+\hat n_2+\hat n_4)\frac{\hat n_1(\hat n_1+1)\hat n_3(\hat n_3+1)}{4}\\+(2\hat n_1\hat n_3+\hat n_1+\hat n_3)\frac{\hat n_2(\hat n_2+1)\hat n_4(\hat n_4+1)}{4}+2\hat n_1\hat n_3+2\hat n_2\hat n_4\\-2\hat n_1\hat n_3(\hat n_2+1)(\hat n_4+1)-2\hat n_2\hat n_4(\hat n_1+1)(\hat n_3+1).
\end{split}
\nonumber
\end{equation}
\vspace{-0.5cm}
\par Assuming that the total number of vehicles in control
area is $N(N \geq 4)$ and $\hat n_1=\hat n_2=\hat n_3=\hat n_4=\frac {N}{4}$, we have
\begin{equation}
 N_{t}^{u}=\frac{N^6}{4096}+\frac{3N^5}{1024}+\frac{15N^4}{256}+\frac{25N^3}{64}+\frac{3N^2}{4}-2N.
 \nonumber
\end{equation}
Thus, the upper bound on number of transitions is $O(N^6)$.
\end{proof}
\vspace{-0.3cm}
\section{Proof of Theorem 4}
\label{appendix C}
\begin{proof}
    In Algorithm 1, for each state transition, we need to calculate and update the objective value of current state. To further reduce the complexity, instead of using Eq. (\ref{equ11a})-Eq. (\ref{equ12'}), it is easy to obtain the objective value according to the conflict relations between vehicles. Specifically, if multiple vehicles can obtain the right of way during current state transition and these vehicles are divided into two continuous queues in different lanes (e.g., Fig. \ref{fig4-new}), we just need to calculate the arrival time assigned to the last vehicle in each queue and then pick the maximum as the objective value, i.e.,
\begin{subequations}
\label{equ13''}
\begin{equation}
t_{\text{assign},r,l} = \max\left(t_{\text{assign},r,f}+(N_r-1)\Delta_{t,1},t_{\min,r,l}\right),
\nonumber
\end{equation}
\begin{equation}
t_{\text{assign},o,l} = \max\left(t_{\text{assign},o,f}+(N_o-1)\Delta_{t,1},t_{\min,o,l}\right),
\nonumber
\end{equation}
\begin{equation}
J =\max\left(t_{\text{assign},r,l},t_{\text{assign},o,l}\right),
\nonumber
\end{equation}
\end{subequations}
where lane $o$ is the opposite lane of lane $r$. $t_{\text{assign},r,l}$ and $t_{\text{assign},o,l}$ denote the arrival time assigned to the last vehicle of the queue in lane $r$ and lane $o$ respectively. $t_{\text{assign},r,f}$ and $t_{\text{assign},o,f}$ denote the arrival time assigned to the first vehicle of the queue in lane $r$ and lane $o$ respectively. $N_r$ and $N_o$ denote the size of the queue in lane $r$ and lane$o$ respectively. $t_{\min,r,l}$ and $t_{\min,o,l}$ denote the minimal arrival time of the last vehicle of the queue in lane $r$ and lane $o$ respectively.
\par Obviously, the time complexity of calculating the objective value is a constant time for each transition, since the computation time does not change with the size of the algorithm input (i.e., the total number of vehicles). Therefore, based on Theorem 2-3, it is obvious that the optimal cooperative driving strategy has the polynomial-time complexity of computation, which has a lower bound $O(N^4)$ and an upper bound $O(N^6)$, where $N$ denotes the number of vehicles.
\end{proof}
\bibliographystyle{IEEEtran}
\bibliography{IEEEabrv,IEEEexample}
\end{document}